
\documentclass[11pt]{article}

\usepackage{amsmath,amssymb,amsthm,boxedminipage,enumerate}

\usepackage{fullpage}

\usepackage{algorithm}

\usepackage{algorithmic}

\usepackage{xspace}


\usepackage{amsfonts}

\usepackage{graphicx}

\newtheorem{theorem}{Theorem}[section]

\newtheorem{lemma}[theorem]{Lemma}

\newtheorem{corollary}[theorem]{Corollary}

\theoremstyle{definition}

\newtheorem{observation}{Observation}

\newcommand{\Search}{\mathsf{Search}}

\newcommand{\Insertedge}{\mathsf{Insert\_edge}}

\newcommand{\Deleteedge}{\mathsf{Delete\_edge}}

\newcommand{\Refresh}{\mathsf{Refresh}}

\newcommand{\old}[1]{}

\title{Dynamic graph connectivity with improved worst case update time and sublinear space}

\author{David Gibb
\thanks{Department of Computer Science, University of Victoria, BC, Canada; david90@uvic.ca, bmkapron@uvic.ca, val@uvic.ca, nolandthorn@gmail.com. This research was supported by NSERC and a Google Faculty Research Grant}
\and Bruce Kapron
 \and Valerie King\ $^{*}$ \and Nolan Thorn $^{*}$}

\date{}
\begin{document}

\maketitle

\thispagestyle{empty}
\begin{abstract} 
\small\baselineskip=9pt

This paper considers fully dynamic graph algorithms with both faster worst case update time and sublinear space. The fully dynamic graph connectivity problem is the following: given a graph on a fixed set of $n$ nodes, process an online sequence of edge insertions, edge deletions, and queries of the form ``Is there a path between nodes $a$ and $b$?" In 2013, the first data structure was presented with worst case time per operation which was polylogarithmic in $n$. In this paper, we shave off a factor of
$\log n$ from that time, to $O(log^4 n)$ per update. For sequences which are polynomial in length, our algorithm answers queries in O($\log n/\log\log n$) time correctly with high probability and using $O(n \log^2n)$ words (of size $\log n$). This matches the amount of space used by the most space-efficient graph connectivity streaming algorithm. We also show that 2-edge connectivity can be maintained using $O(n \log^2 n)$ words with an amortized update time of $O(\log^6 n)$. 

\end{abstract}
\newpage
\setcounter{page}{1}
\section{Introduction}

In the dynamic connectivity problem, we are given an undirected graph $G=(V,E)$, where $V$ is a fixed set of  $n$ nodes, and an online sequence of updates and queries of the following form:

\begin{itemize}
\setlength\itemsep{0em}
\item
$\mathsf{Delete}(e)$: Delete edge $e$ from $E$.
\item
$\mathsf{Insert}(e)$: Insert edge $e$ into $E$.
\item
$\mathsf{Query}(x,y)$: Is there a path between nodes $x$ and $y$? 
\end{itemize}
The goal is to process graph updates and queries as efficiently as possible. 
Our data structure solves the dynamic connectivity problem using $O(\log^4)$ worst case time per update operation and $O(\log  n/\log \log n)$ worst case time per query.  In addition, if the sequence of updates is polynomial in length, our data structure needs only a sublinear amount of storage, namely $O(n \log^2 n) $ words (of size $O(\log n)$), to ensure that with high probability, all queries are answered correctly. By way of comparison,  the dynamic graph stream problem is to process an online sequence of polynomial number  of edge insertions and deletions so that when the stream ends, a spanning forest is output, where the goal is to minimize space. Our algorithm is the first sublinear space dynamic graph algorithm: it matches the space needed by the most space efficient dynamic graph stream algorithm by Ahn, Guha, and McGregor from 2012 \cite{AGM}, and it shaves a factor of $O(\log n)$ off the worst case time per update of the best previously known dynamic connectivity algorithm by Kapron, King, and Mountjoy from 2013 \cite{KKM}. 

 As an application, we show the first sublinear space dynamic  2-edge connectivity algorithm, which uses $O(\log^6 n)$ amortized update time, $O(\log n /\log\log n)$ query time,  and $O(n \log^2 n)$ words. This compares to the $O(m+ n \log n)$ word algorithm with amortized update time $O(\log^4 n)$ and $O(\log n/\log\log n)$ query time by Holm, de Lichtenberg, and Thorup from 1998 \cite{Holm}.

The sublinear space algorithm has 2-sided error of $1/n^c$ for any constant $c$. If we keep a list of edges with additional space $O(m)$
then there is 1-sided error:
if the answer to a query is ``yes," then the answer is always correct; if it is ``no," then it is correct with probability $1-1/n^c$ for $c$ any constant.  In both cases, we assume the adversary knows the edges in the graph, but constructs the sequence of updates and queries with no knowledge of the random bits used by the algorithm, other than whether the queries have been answered correctly or not. The edges of the spanning forest used by the data structure are not revealed to the adversary.

Several dynamic graph problems can easily be reduced to the problem of maintaining connectivity.  See \cite{HK}. Thus our algorithm yields a dynamic graph algorithm for determining the weight of a minimum spanning tree in a graph with $k$ different edge weights, where the cost of updates is increased by a factor of $k$. It also can be used to maintain the property of bipartiteness with only constant factor increase in cost. See \cite{AGM},\cite{HK}.   

 \bigskip

\noindent
{\it An Overview of Techniques:} 
Nearly every dynamic connectivity data structure maintains a spanning forest $F$. Dealing with edge insertions is relatively easy. The challenge is to find a replacement edge when a tree edge is deleted, splitting a tree into two subtrees.  A replacement edge is an edge reconnecting the two subtrees, or, in other words, in the {\em cutset} of the cut $(T, V\setminus T)$ where $T$ is one of the subtrees. To simplify notation,
where there is no confusion, we allow $T$ to denote the set of nodes in the tree $T$. Otherwise we will use the notation $v(T)$.


We compare the techniques of our paper to the 2013 dynamic graph connectivity \cite{KKM}
and the 2012 dynamic graph stream paper \cite{AGM}. 
All three papers use random sampling to identify the name of a edge which crosses a cut.  The  2013 dynamic graph paper introduced the cutset data structure. It maintains a dynamic forest of disjoint trees which is a subgraph of a graph $G$
and, for each tree, a ``leaving" edge, i.e., an edge with exactly one endpoint in the tree. 
That paper uses ${O}(m \log n + n \log^2 n )$ words to implement the cutset data structure and ensures that edges leaving are identified with high probability. For each edge inserted, it flips fresh coins to determine which samples it appears in and records these outcomes in a table. 

The graph stream paper implicitly uses a restricted cutset data structure which accommodates a continuous stream of nontree edge insertions and deletions, with one final batch change of nontree edges to tree edges, at which point the leaving edge from a tree is determined with constant probability. The streaming paper uses $O(n \log n)$ words to implement this restrictive cutset data structure and $O(n \log^2 n)$ words overall. It uses hash functions rather than independent random bits to determine whether an edge is in a sample, eliminating the need for a table to remember the sampling outcomes. It uses a well-studied technique known as $L_0$ sampling and 1-sparse recovery to sample, find an edge with constant probability, and then verify  with high probability it is an edge. 

Our cutset data structure also determines an edge leaving with constant probability
using $O(n \log n)$ words. It can be seen as using $L_0$ sampling and a novel, possibly more practical, method of 1-sparse recovery.

All three papers use  $O(\log n)$ tiers of cutset data structures to build a spanning forest, similar to the Bor\r{u}vka's parallel minimum spanning tree algorithm. Each tier contains a forest, and each forest is a subforest of the tier above it.  In the cutset data structure on tier $i$, the leaving edges discovered there become tree edges in the forest on tier $i+1$, and so on. The streaming paper needs to form these forests only once at the end of the stream. In worst case, a single insertion or deletion can cause a superpolylogarithmic number of changes to edges leaving on the various tiers. Thus maintaining the spanning forest for each update that the streaming algorithm would have produced if the stream had ended at that update, is too costly in terms of time. By using high probability to find edges leaving, the 2013 dynamic graph paper avoids this problem by finding an edge leaving every non-maximal tree in a tier at an extra cost of $\log n$ factor in time and space. Our algorithm avoids the cost of using high probability; it does not necessarily keep the same tree edges as the streaming algorithm, but it ensures that about the same number of trees are merged on the next tier.  This induces a subtle change to the invariants, which are less restrictive.

\bigskip

\noindent
{\it Related work:}
For other related work on dynamic connectivity, see \cite{KKM}

\bigskip

\noindent
{\it Organization:} In Section 2 we define  the cutset data structure. In Section 3, we show how it can be used to solve dynamic connectivity. In Section 4, we discuss the implementation of cutset data structures. In Section 5, we give a space efficient dynamic 2-edge connectivity algorithm.  

\section{Cutset data structure definition and results}

Consider a forest $F$ of disjoint trees which are subtrees of a graph $G$. $G$ and $F$ are dynamic, that is, they are undergoing updates in the form of edge insertions and deletions.  In addition, an edge in $G$ which links two trees may be updated to become a tree edge which would then cause the two trees in $F$ to be joined as one.  Similarly a tree edge may be changed to a non-tree edge. 
The problem is to design a data structure with the goal of maintaining, for each tree $T$ in $F$ which is non-maximal, an edge in the cutset of $(T, V\setminus T)$. A {\it successful} $\Search(T)$ returns this edge if the cutset is non-empty and $null$  otherwise. We assume that the graph updates are oblivious to the (randomized) outcomes of $\Search$. I.e., we may consider the sequence of updates to be fixed in advance, but updates are revealed to the algorithm one at a time.  The operations are:

\begin{itemize}
\setlength\itemsep{0em}
\item 
$\mathsf{Insert\_edge}(x,y )$: Adds edge $\{x,y\}$ to $E$.

\item
$\mathsf{Make\_tree\_edge}(x,y)$:  For $\{x,y\} \in E$ but not in $F$, adds $\{x,y\}$ to $F$.

\item
$\mathsf{Make\_nontree\_ edge}(x,y)$: Removes $\{x,y\}$ from $F$.

\item
$\mathsf{Delete\_ edge}(x,y)$: Removes $\{x,y\}$ from $E$.  If $\{x,y\}$ is an edge of $F$, removes it from $F$.

\item
$\mathsf{Search}(T)$: For $T \in F$, returns $null$ if there are no edges of $E$ in the cutset of $(T, V\setminus T)$; returns an edge in th cutset of $(T, V\setminus T)$ otherwise. 

\end{itemize}

\begin{lemma} \label{l:cutset}
\label{fixedcut}
Let $c>0$ be any constant.  There is a cutset data structure for graph $G$ with forest $F$ with worst case update time $O(\log^2 n)$ and space $O(n\log n)$ words, such that for any $T \in F$, where $C$ is the cutset of $(T,V\setminus T)$, if
 $C\neq \emptyset$, then with constant probability $p\geq 1/8 -1/n^c$, $\Search(T)$ returns an edge in the cutset; with probability less than $1/n^c$ it returns a non-edge or edge not in $C$; otherwise it returns $null$. If $C=\emptyset$, it always returns $null$.
   
With an auxilliary list of $O(m)$ words to keep track of the current edge set in the graph, $\Search(T)$ returns an edge in $C$ with probability $1/8$ and always returns an edge in $C$, if it returns an edge. 
\end{lemma}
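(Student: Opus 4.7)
The plan is to implement the cutset data structure via Euler tour trees (ETT) on $F$ augmented with per-node XOR sketches at $L = \lceil \log_2 n\rceil + O(1)$ geometrically decreasing sampling rates, in the spirit of $L_0$-sampling. Assign each potential edge $e = \{u,v\}$ an identifier $\mathrm{id}(e) := (u,v,\tau(e))$, where $\tau(e)$ is an $\Theta(c\log n)$-bit tag obtained from a pseudorandom hash of $\{u,v\}$, so no per-edge random bits need be stored. For each level $i$, use an independent pseudorandom hash $h_i$ so that edge $e$ enters the level-$i$ sample iff $h_i(e) < 2^{-i}$. For each node $v$ and level $i$, maintain
\[
S_i(v) := \bigoplus_{\substack{e \ni v,\; e \in \mathrm{sample}_i}} \mathrm{id}(e),
\]
and store each tree of $F$ as an ETT whose every element is augmented with the XOR of the $S_i$ values in its ETT-subtree, so that the aggregate $A_i(T) := \bigoplus_{v \in v(T)} S_i(v)$ can be read from the root of $T$'s ETT in $O(1)$ per level.

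An edge insertion or deletion touches $O(\log n)$ levels at each of its two endpoints, and each per-level change propagates up the ETT in $O(\log n)$ time, giving $O(\log^2 n)$ worst case update time; \textsf{Make\_tree\_edge} and \textsf{Make\_nontree\_edge} reduce to $O(\log n)$ ETT link/cut operations with $O(\log n)$ augmentations each, yielding the same bound. Space is dominated by the $O(\log n)$ constant-size sketches per node and per ETT element, totaling $O(n\log n)$ words. To implement $\Search(T)$, compute $A_i(T)$ at each of the $O(\log n)$ levels, parse the bits into a candidate pair $(u^\star,v^\star)$ and tag $\tau^\star$, and accept iff (i)~$u^\star \neq v^\star$, (ii)~exactly one of $u^\star,v^\star$ lies in $v(T)$, and (iii)~$\tau^\star = \tau(\{u^\star,v^\star\})$; return the first accepted candidate, else $null$. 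Edges internal to $T$ appear in exactly two summands of $A_i(T)$ and cancel, so $A_i(T)$ equals the XOR of $\mathrm{id}(e)$ over cutset edges $e$ that survive level-$i$ sampling, and whenever some level isolates a single cutset edge the procedure recovers it exactly. If $C=\emptyset$, every $A_i(T)=0$, test (i) always fails, and $null$ is returned.

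For the probability analysis, let $k := |C|$. At level $i^\star := \lceil \log_2 k \rceil$ each cutset edge is sampled independently with probability $p := 2^{-i^\star}$, with $kp \in [1/2,1]$, so $\Pr[\text{exactly one cutset edge at level } i^\star] = kp(1-p)^{k-1} \geq 1/4$ uniformly in $k \geq 1$. When $A_i$ is the XOR of $\geq 2$ identifiers, it accidentally satisfies checks (i)--(iii) with probability at most $2^{-\Theta(c\log n)}$ over the unseen hash $\tau$; a union bound over $O(\log n)$ levels and the polynomial-length sequence of operations bounds the probability of returning a non-edge or non-cutset edge by $1/n^c$. Combined, these give the stated success probability $\geq 1/8 - 1/n^c$. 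For the auxiliary-list variant, keep the current edge set of $G$ in an $O(m)$-word hash table and return a candidate only if it is present there; any output is then a genuine edge of $G$ passing the cut test, so the error becomes one-sided at the stated probability $1/8$. I expect the main obstacle to be the combined probability analysis: pinning down a uniform-in-$k$ constant for the isolation probability while simultaneously arguing that the tag-based $1$-sparse recovery keeps false positives below $n^{-c}$ against an oblivious adversary who still observes the answers to past queries, which in principle could leak information about $\tau$ through the $\Search$ outputs.
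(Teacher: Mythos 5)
Your overall architecture is the same as the paper's: per-node, per-level XOR sketches of edge names aggregated over each tree by an ET-tree, geometric level sampling to isolate a single cut edge, and a randomized verifier to confirm that the aggregated word is a single edge name; the time/space accounting and the auxiliary-list variant also match. However, both of your probability claims rest on treating small-seed hash functions as if they were fully random, and as written neither survives under the hash families you can actually afford within $O(n\log n)$ words. First, the isolation bound $kp(1-p)^{k-1}\ge 1/4$ requires the $k$ cut edges to be sampled \emph{mutually} independently; storing independent coins per edge is exactly the $O(m)$-word cost of \cite{KKM} that this lemma is designed to avoid, so $h_i$ must be a limited-independence hash, and then the product formula is invalid. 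The paper instead uses a single $2$-wise independent hash with nested level sets and proves, by a pairwise inclusion--exclusion/union-bound argument (its Lemma~\ref{l:findany}), that some level isolates exactly one cut edge with probability $\ge 1/8$ --- which is all the stated bound $p\ge 1/8-1/n^c$ needs. You should replace your calculation with that argument.

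Second, and more seriously, your tag check needs the following: for every set $S$ of at least two sampled cut edges, the XOR $\bigoplus_{e\in S}\tau(e)$ matches the tag of the parsed candidate pair with probability at most $n^{-\Theta(c)}$. This is immediate if $\tau$ is a truly random function, but that again requires per-edge storage; if $\tau$ is only $d$-wise independent, the XOR over a set of size exceeding $d$ need not be anywhere near uniform, and $|S|$ can be as large as $n$. So the step ``accidentally satisfies checks (i)--(iii) with probability at most $2^{-\Theta(c\log n)}$'' is unjustified. The paper's verifier is built precisely to avoid this: for each of $c\lg n$ trials it splits the surviving edges with a $2$-wise independent hash $h_j$ (both sides nonempty with probability $\ge 1/2$ when $|S|\ge 2$) and tests each side with Thorup's $(1/8)$-odd hash function, whose defining property is a parity guarantee for \emph{arbitrary} nonempty sets; this detects $|S|\ge 2$ with probability $\ge 1/128$ per trial, hence w.h.p.\ overall. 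An alternative repair is a polynomial fingerprint modulo a prime in the style of \cite{CormodeM}, which the paper discusses in a footnote and rejects only on efficiency grounds. Finally, your closing concern about query answers leaking information about $\tau$ is moot under the paper's model, which assumes the update sequence is oblivious to the outcomes of $\Search$.
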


\noindent 
The implementation of the cutset data structure is discussed in Section \ref{s:cutset}.

\section{Fully dynamic connectivity algorithm}
\begin{theorem}\label{t:main}
For any constant $c$ and any sequence of polynomially many updates, there is a dynamic connectivity algorithm which answers all queries correctly with high probability and uses $O(\log^4 n)$ time per update and $O(n \log^2 n)$ words. The error is 2-sided, i.e., "yes" may be output when the correct answer is "no" and vice versa, with probability less than $1/n^c$.

If a list of current edges in the graph is maintained (with an additional $O(m)$ words), 
any sequence of any length of updates may be processed so that with high probability any single query is correct.  In this case, the error is 1-sided; if an answer to a query is ``yes" the answer is always correct and if it is ``no", it may be incorrect with probability $1/n^c$. 
\end{theorem}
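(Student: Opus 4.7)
The plan is to follow the tiered Borůvka-style construction of \cite{KKM} and \cite{AGM}, but to exploit Lemma~\ref{fixedcut} at its natural constant success probability rather than boosting each $\Search$ call to high probability; this is where the $\log n$ saving over \cite{KKM} comes from. I maintain $k = \Theta(\log n)$ tiers. At tier $\ell$ I keep a single cutset data structure $D_\ell$ whose underlying edge set is the current $E$ and whose forest $F_\ell$ satisfies $F_0 \subseteq F_1 \subseteq \cdots \subseteq F_{k-1}$, with $F_{\ell+1}\setminus F_\ell$ consisting of edges returned by $\Search$ calls on non-maximal trees of $D_\ell$. The top forest $F_{k-1}$ is kept inside an Euler tour tree of high branching factor so that $\mathsf{Query}(x,y)$ reduces to a root comparison in $O(\log n/\log\log n)$ time.

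Each input update is pushed through the tiers one at a time. On $\mathsf{Insert}(e)$ I call $\Insertedge(e)$ on every $D_\ell$, then walk up from $\ell=0$, promoting $e$ via $\mathsf{Make\_tree\_edge}$ whenever $e$ would join two trees in $F_\ell$. On $\mathsf{Delete}(e)$ I call $\Deleteedge(e)$ on every $D_\ell$; if $e$ was a tree edge at some smallest tier $j$, its removal splits a tree $T \in F_j$ into two pieces $T_1,T_2$, and at each tier $\ell \geq j$ I invoke $\Search$ on the smaller side, propagating any replacement upward through $\mathsf{Make\_tree\_edge}/\mathsf{Make\_nontree\_ edge}$ calls on the affected $D_\ell$'s. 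Since every $\Search$, $\Insertedge$, $\Deleteedge$, or promotion on a single $D_\ell$ costs $O(\log^2 n)$ by Lemma~\ref{fixedcut}, and one input update triggers a cascade of $O(\log n)$ promotions across $O(\log n)$ tiers, the worst-case update cost is $O(\log n)\cdot O(\log n)\cdot O(\log^2 n) = O(\log^4 n)$; summing $O(n\log n)$ words per tier yields the claimed $O(n\log^2 n)$ total space.

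Correctness reduces to two union bounds. The ``non-edge returned'' failure mode of Lemma~\ref{fixedcut} occurs per $\Search$ with probability at most $1/n^c$, so rescaling $c$ bounds the aggregate contribution over the polynomially many searches by $1/n^{c'}$ for any target $c'$. The ``null returned despite non-empty cut'' failure mode occurs with probability at most $7/8$ per search, but the weakened invariant (``about the same number of non-maximal trees are merged on the next tier'' from the overview) means every connected component of $G$ condenses into a single tree in $F_{k-1}$ unless $\Omega(\log n)$ independent searches all fail on the ancestral chain of a given component, an event of probability $(7/8)^{\Omega(\log n)} \leq 1/n^c$. This yields the stated two-sided error. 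For the one-sided variant, I consult the auxiliary edge list to reject any $\Search$ output that is not currently in $E$ and in the relevant cut, which eliminates the non-edge mode entirely and leaves only the missed-cut event, which can turn a true ``yes'' into a ``no'' but never the reverse.

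The hardest step will be formalizing the weakened invariant and showing it is maintained under worst-case oblivious update sequences. A single deletion can ignite a cascade of $\Search$-and-swap operations up the hierarchy, and unlike in \cite{KKM} the algorithm cannot insist that each individual search succeed; I expect to argue via a Chernoff/martingale-style concentration that, in expectation and with high probability over any polynomially long prefix of the sequence, the number of non-maximal trees at tier $\ell$ drops by a constant factor relative to tier $\ell-1$, so that $k=O(\log n)$ tiers suffice to reduce each component to a single top-level tree. Once this invariant is in place, combining it with the per-operation bounds of Lemma~\ref{fixedcut} and the Euler tour tree query cost closes the proof.
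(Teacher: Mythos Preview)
Your high-level plan matches the paper's: $O(\log n)$ tiers of cutset structures with nested forests, constant-probability $\Search$, and a Chernoff-type argument over tiers. But two pieces are not yet right.

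First, the ``ancestral chain'' argument is incorrect as stated. Even if $\Search$ succeeds at every tier on the fragment containing a fixed node $v$, the invariant only gives $v(T_\ell(v)) \subsetneq v(T_{\ell+1}(v))$, i.e.\ strict containment, not doubling; along a single chain this could require $\Theta(|C|)$ tiers for a component $C$, not $\Theta(\log n)$. The paper instead argues globally about fragment counts: by Markov, with constant probability a constant fraction of all fragments at tier $\ell$ have successful $\Search$ (Lemma~\ref{l:singletier}), and since any edge leaves at most two fragments the total fragment count then drops by a constant factor (Observation~\ref{observation}). A Chernoff bound over the $O(\log n)$ tiers, which use independent randomness, gives the spanning-forest guarantee for a \emph{fixed} graph (Lemma~\ref{l:alltiersStatic}); a union bound over the polynomially many graphs in the update sequence finishes (Corollary~\ref{c:dynST}). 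This is not a martingale over updates---the randomness is fixed at initialization and the analysis is per-graph.

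Second, for that Markov step to apply you need the precise invariant the paper actually maintains (Invariant~\ref{invariance3}): for every fragment $T$ at tier $\ell$, if $\Search(T,\ell)$ \emph{currently} succeeds (a deterministic function of the current graph and the fixed hash functions), then $T$ is properly contained in a tier-$(\ell{+}1)$ tree. Your $\mathsf{Insert}$ promotes $e$ itself rather than whatever $\Search$ now returns on the merged fragment, and your $\mathsf{Delete}$ searches only the smaller side; either can leave an isolated fragment on which $\Search$ succeeds, violating the invariant. The paper's $\Refresh$ checks both $T(x)$ and $T(y)$ at every tier, inserts the edge that $\Search$ actually returns, and---crucially---when that edge would create a cycle at some higher tier, swaps out a maximum-tier edge on the cycle (found via an ST-tree on $F_{top}$). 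Without this swap you cannot keep the forests nested and still bound the number of tree-edge changes per update by $O(\sum_\ell (top-\ell))=O(\log^2 n)$, which is what drives the $O(\log^4 n)$ time bound.
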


\subsection{Tiers and Invariants}
 
Recall Bor\r{u}vka's parallel algorithm for building a minimum spanning tree \cite{boruvka}. We may view that algorithm as constructing a sequence of forests  we call {\em tiers}. On {\em tier} 0 are the set of nodes in the graph; on {\em tier} $\ell+1$ is a subforest of the minimum spanning forest formed when each $\ell-1$ tree picks a minimum cost edge linking it to another $\ell-1$ tree, 
\old{while edges which form cycles are discarded. If we represent each tree on each tier $\ell$ as a node, and make a node a parent of the trees from tier $\ell-1$ which form it, we have what has been called the {\em Bor\r{u}vka tree} (or forest if the graph is not connected) \cite{king}. }

Our algorithm has similar structure.   
 For each tier $\ell=0,1,...,top$, where $top=O(\log n)$,  we maintain a cutset data structure $\mathcal{CD}_{\ell}$ for $G$ with a forest $F_{\ell}$ and refer to the collection as $\mathcal{CD}$. Each cutset data structure in $\mathcal{CD}$ is generated using
 independent randomness.
Because tier $top$ is not used to form a forest on
a higher tier, the data structure only consists of a forest $F_{top}$, i.e., there is no
randomness array on this tier. 

For each tier $\ell< top$,  $\Search(T, \ell)$  refers to the $\Search(T) $ operation in the cutset data structure $\mathcal{CD}_{\ell}$ where $T \in F_{\ell}$; we similarly extend the definitions of other operations of a cutset data structure for each tier $\ell$.  Since
there is no randomness array on tier $top$, for $\Deleteedge(e,top)$ we remove $e$ from $F_{top}$ if it appears there but do nothing else.

As the graph is updated, the cutset data structure for  tier $\ell$ is used to find edges leaving non-maximal trees (``fragments") in that tier. 
Given an edge $e$, let ${\ell}$ be the minimum tier in which $e$ first appears in $ F_{\ell}$. We then refer to it as a {\em tier $\ell$ edge.} 
We maintain the following invariants of $\mathcal{CD}$. Let $G$ be the current graph.  
 

\begin{enumerate}
\setlength\itemsep{0em}
\item $F_0=V$, the vertices of $G$; 
\item For every tier $\ell$, $0\leq \ell \leq top-1$, $F_{\ell} \subseteq F_{\ell+1} \subseteq G.$
\item Let $T$ be a node in tier $\ell<top$. If $\Search(T,\ell)$ is successful  then $v(T)$ is properly contained in some $v(T')$ for $T'$ a node  on tier $\ell+1$. \label{InvariantProper}  \label{invariance3}

\end{enumerate}

\noindent
{\it Discussion:} Note that these invariants do not necessarily preserve the property that every edge in the spanning tree $F_{top}$ is currently an edge
leaving some fragment in some $\mathcal{CD}$. Rather, Invariant \ref{InvariantProper} ensures that any fragment $T$ in tier $\ell$ for which $\Search(T)$ is successful has {\it some} edge leaving it in $F_{\ell +1}$. Also, the invariants do not preclude there being a tree edge in $F_{\ell+1}$ which is incident to a fragment $T$ in $F_{\ell}$ when $\Search(T)$ is not successful. The edges in the forests may be thought of as  ``historical" $\Search$ edges, edges which were once returned by $\Search$. This gives us the flexibility to perform updates quickly.

\old{
These invariants and Lemma \ref{l:singletier} imply:

\begin{lemma} \label{l:alltiersStatic}
For any graph $G$ and any constant $c$,
let $a=\lceil{\log_{4/(4-p)} n} \rceil$ and $top=\max\{2a/\alpha, 8c\ln n/\alpha \}$, where $\alpha=(1-p)/(1-p/2)$. If the invariants hold then with probability $1-1/n^c$, $F_{top}$ is a spanning forest of $G$. If the augmented cutset data structure is used, then it is always the case that  $F_{top}$ is a subtree of a spanning forest of $G$. 
\end{lemma}
}
 Markov's inequality implies that with constant probability, a constant fraction of executions of $\mathsf{Search}$ on fragments are successful: 
\begin{lemma} \label{l:singletier}
Let $p$ be defined as in Lemma $\ref{l:cutset}$. In a cutset data structure with $f_{num}$ fragments, with probability at least $\alpha=(1-p)/(1-p/2)$ the number of fragments $T$ such that $\mathsf{Search}(T)$ succeeds is at least $(p/2)f_{num}$. 
\end{lemma}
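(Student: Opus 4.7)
The plan is a short Markov's inequality argument applied to the number of failed searches on a single tier. Enumerate the fragments with non-empty cutsets as $T_1,\dots,T_{f_{num}}$, and for each $i$ let $X_i$ be the indicator of the event that $\Search(T_i)$ does not return an edge of the cutset of $(T_i, V\setminus T_i)$. By Lemma~\ref{l:cutset}, $\Pr[X_i=1]\le 1-p$, so by linearity of expectation $E[X]\le (1-p)f_{num}$, where $X:=\sum_i X_i$ counts the total number of failures.

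Applying Markov's inequality at threshold $(1-p/2)f_{num}$ yields
\[
\Pr\!\left[X\ge (1-p/2)f_{num}\right] \;\le\; \frac{E[X]}{(1-p/2)f_{num}} \;\le\; \frac{1-p}{1-p/2}.
\]
Taking the complementary event, with the probability asserted in the lemma the number of failures is strictly less than $(1-p/2)f_{num}$, equivalently the number of successes $f_{num}-X$ exceeds $(p/2)f_{num}$, which is exactly the claim.

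Markov's inequality requires only the expectation bound, so no independence among the $X_i$ is needed and the argument is insensitive to correlations across the searches within the tier's cutset data structure. The only care required is bookkeeping: all non-successful outcomes described in Lemma~\ref{l:cutset} (returning $null$ when $C\neq\emptyset$, returning a non-edge, or returning an edge outside $C$) must be bundled into $X_i=1$, and the additive $1/n^c$ slack in the bound $p\ge 1/8-1/n^c$ propagates through the arithmetic without issue. There is no substantive obstacle; the proof is essentially one line of expectation and one line of Markov.
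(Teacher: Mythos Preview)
Your argument is exactly the paper's: bound the expected number of failed searches by $(1-p)f_{num}$ via linearity of expectation and apply Markov's inequality at the threshold $(1-p/2)f_{num}$. One caveat you share with the paper: the complement of the Markov bound yields success probability at least $1-\frac{1-p}{1-p/2}=\frac{p/2}{1-p/2}$, not $\alpha=\frac{1-p}{1-p/2}$ as literally stated, so your phrase ``with the probability asserted in the lemma'' inherits what appears to be a typo in the statement rather than an error in the method.
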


\begin{proof}
If there are $f_{num}$ fragments in the cutset data structure, then
the expected number of fragments for which $search$ fails to succeed is no greater than
$\mu=(1-p)f$ since each has probability at least $p$ of finding a tree edge. By Markov's Inequality, the
probability that $(1-p/2)f_{num}$ fail to find a tree edge is no more than $\mu/(1-p/2)f_{num}=(1-p)/(1-p/2)$.
\end{proof}

\smallskip

As any edge can leave at most two node-disjoint fragments, it is easy to see the following:
\begin{observation} \label{observation}
Let $F$ be a forest containing $f$ fragments and $E'$ be a set of edges leaving at least $cf$ fragments $T$ then the spanning forest of $F\cup E'$ has no more than $(1-c/2)f$ fragments.
\end{observation}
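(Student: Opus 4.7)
The plan is to reformulate the statement in terms of a multigraph on fragments and then count saved components. First I would build the auxiliary multigraph $H$ whose vertex set is the set of $f$ fragments of $F$ and whose edges are exactly the edges of $E'$, with both endpoints identified with the fragment containing them (edges internal to a single fragment become loops and can be discarded). The key observation that makes this reformulation work is that the number of trees in a spanning forest of $F \cup E'$ is precisely the number of connected components of $H$, since contracting each fragment of $F$ does not change the number of components of $F \cup E'$.

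Next I would translate the hypothesis. By assumption, $E'$ contains an edge leaving each of at least $cf$ fragments. An edge leaves a fragment iff exactly one of its endpoints lies in that fragment, so in $H$ this means at least $cf$ vertices have degree $\geq 1$, i.e., $H$ has at least $cf$ non-isolated vertices.

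Now I would count the reduction in the number of components. If the non-trivial connected components of $H$ (those containing at least one edge) have sizes $k_1,k_2,\ldots,k_r$ with each $k_i \geq 2$, then these components collectively contain $\sum_i k_i \geq cf$ of the non-isolated vertices, and each such component contributes $k_i - 1$ to the reduction in the fragment count (merging $k_i$ fragments into one). Using $k_i - 1 \geq k_i/2$ (valid because $k_i \geq 2$), the total reduction is
\[
\sum_{i=1}^{r}(k_i - 1) \;\geq\; \frac{1}{2}\sum_{i=1}^{r} k_i \;\geq\; \frac{cf}{2}.
\]
Therefore the spanning forest of $F \cup E'$ has at most $f - cf/2 = (1-c/2)f$ fragments, which is the claim.

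The argument has no real obstacle, since the observation is essentially a counting lemma; the only step that requires any care is the translation from ``edges leaving fragments'' to ``non-isolated vertices in $H$'', which relies on the remark preceding the statement that an edge leaves at most two (node-disjoint) fragments, ensuring that each edge of $E'$ contributes to the degrees of at most two distinct fragment-vertices in $H$ rather than spuriously inflating the count.
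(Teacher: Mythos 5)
Your proof is correct and is essentially the argument the paper has in mind: the paper gives no formal proof, only the remark that each edge leaves at most two node-disjoint fragments and the claim that the rest is "easy to see," and your contraction to the fragment multigraph together with the $k_i-1\geq k_i/2$ count per non-trivial component is the standard formalization of exactly that. No gaps.
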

\begin{lemma} \label{l:alltiersStatic}
For any graph $G$ and any constant $c$,
let $a=\lceil{\log_{4/(4-p)} n} \rceil$ and $top=\max\{2a/\alpha, 8c\ln n/\alpha \}$, where $\alpha=(1-p)/(1-p/2)$. If the invariants hold then with probability $1-1/n^c$, $F_{top}$ is a spanning forest of $G$. 
\end{lemma}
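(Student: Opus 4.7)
The plan is to track $f_\ell$, the number of \emph{non-maximal} trees in $F_\ell$ (i.e., trees which are strictly contained in a connected component of $G$), and show that tier progress is geometric in the good case. Call tier $\ell < top$ \emph{good} if the event in Lemma \ref{l:singletier} occurs for $\mathcal{CD}_\ell$, namely that at least $(p/2) f_\ell$ of the $\mathsf{Search}(T,\ell)$ calls on non-maximal trees $T$ succeed. By Invariants 2 and 3, each successful search contributes to $F_{\ell+1}$ an edge leaving the corresponding fragment of $F_\ell$. Applying Observation \ref{observation} to $F_\ell$ with the set $E'$ of these returned edges, the number of trees in the spanning forest of $F_\ell \cup E'$ drops by at least $(p/4) f_\ell$; since $F_\ell \subseteq F_{\ell+1}$ and $F_{\ell+1}$ contains these edges, the number of non-maximal trees satisfies $f_{\ell+1} \leq (1-p/4) f_\ell$.

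From this multiplicative decrement a straightforward induction gives $f_\ell \leq n \bigl(\tfrac{4-p}{4}\bigr)^{G_\ell}$, where $G_\ell$ is the number of good tiers among $0, \ldots, \ell-1$. Thus once $G_{top} \geq a = \lceil \log_{4/(4-p)} n \rceil$ we have $f_{top} \leq 1$. Since the non-maximal trees of any single connected component of $G$ come in groups of size at least two, $f_{top} = 1$ is impossible and so $f_{top} = 0$, which is exactly the statement that $F_{top}$ is a spanning forest of $G$.

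It remains to bound $P[G_{top} < a]$. Because the cutset data structures on distinct tiers are generated using independent randomness, conditional on the state going into tier $\ell$, tier $\ell$ is good with probability at least $\alpha$ by Lemma \ref{l:singletier}. A standard coupling argument then shows that $G_{top}$ stochastically dominates $\mathrm{Bin}(top, \alpha)$. The hypothesis $top \geq 2a/\alpha$ gives $\alpha \cdot top \geq 2a$, and a multiplicative Chernoff bound (deviation below half the mean) yields
\[
P[\mathrm{Bin}(top,\alpha) \leq \alpha \cdot top/2] \leq e^{-\alpha \cdot top/8}.
\]
The second hypothesis $top \geq 8c \ln n/\alpha$ then forces the right-hand side to be at most $1/n^c$, so $P[G_{top} < a] \leq 1/n^c$, completing the proof.

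The main subtlety will be the bookkeeping around \emph{fragments} versus \emph{non-maximal fragments} in Lemma \ref{l:singletier} and Observation \ref{observation}: only non-maximal trees contribute to a successful search, and one must verify that a successful tier shrinks precisely this quantity by the factor $(1 - p/4)$, even though the total tree count includes maximal trees whose count is unchanged. Once this accounting is in place, the rest of the argument is a routine combination of geometric shrinkage and a Chernoff tail bound.
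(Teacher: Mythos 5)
Your proposal is correct and follows essentially the same route as the paper: Lemma \ref{l:singletier} gives a constant probability $\alpha$ that a tier is ``successful,'' Invariant \ref{InvariantProper} plus Observation \ref{observation} give the $(1-p/4)$ shrinkage of the fragment count on a successful tier, and a Chernoff bound on the number of successful tiers among $top$ finishes the argument. The only nitpick is that Invariant \ref{InvariantProper} guarantees each successfully searched fragment has \emph{some} edge of $F_{\ell+1}$ leaving it (not necessarily the returned edge), so Observation \ref{observation} should be applied to those tier-$(\ell+1)$ edges rather than to ``the returned edges''; this does not affect the argument.
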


\begin{proof}
 Let $p$ be the constant in Lemma \ref{l:cutset}.
We call a tier with $f_{num}$ fragments {\em successful} if the number of fragments $T$ where $\Search(T)$ is successful is
at least $ (p/2)f_{num}$. Let $X_\ell=1$ if tier $\ell$ is successful and 0 otherwise. Then by Lemma \ref{l:singletier}, $Pr(X_\ell=1) \geq \alpha$.  For any tier $\ell$, if there are $(p/2)f_{num}$ fragments with successful searches in a tier $\ell$ then by Invariant \ref{InvariantProper}, each of these fragments is properly contained in $F_{\ell+1}$, i.e., the set of tier $\ell+1$ edges includes an edge leaving for each of these. Then by Observation~\ref{observation}, the number of fragments in $F_{\ell+1} \leq (1-p/4)f_{num}$.

Since $(1-p/4)^a n \leq 1$, $a$ successful tiers suffice to bring the number of fragments to 0 (as there can't be only 1). If we flip $\max\{2a/\alpha, 8c\ln n/\alpha \}$ coins with probability of heads $\alpha$, the expected number of heads at  $\mu=\max\{2a, 8c \ln n\}$.  The probability that the number of heads is less than $a=\mu/2$ is 
is less than
$e^{-(1/2)^2 \mu/2}=e^{-((1/2)^2 (8 c\ln n) /2 )}=  1/n^{c}$, by a Chernoff bound. 
\end{proof}

We note that the results for a fixed graph can be extended to hold for $poly(n)$ arbitrary subgraphs, including those generated by a sequence of polynomially many updates:


\begin{corollary} \label{c:dynST} Let $G^u$ be the graph $G$ after $u$ updates,
let $c',d$ be any constants, and assume $u \le n^d$. If $top=\max\{2a/\alpha, 8(c'+d)\ln n/\alpha \}$,
then
in the cutset data structures $\mathcal{CD}^u$ for each $G^u$,  with probability $1-1/n^{c'}$  every $F^u_{top}$ is a spanning forest of $G^u$. This holds
even if there is dependence in the randomness between any pair of $\mathcal{CD}^u$'s. 
\end{corollary}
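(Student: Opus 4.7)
The plan is to upgrade Lemma~\ref{l:alltiersStatic} from a statement about a single graph $G$ to a statement about the polynomially many graphs $G^u$ by (i) plugging a slightly larger constant into Lemma~\ref{l:alltiersStatic} so that each individual failure probability absorbs the extra $n^d$ factor, and (ii) applying a union bound over $u$. Concretely, I would set $c = c' + d$ in Lemma~\ref{l:alltiersStatic}. With this choice, the prescribed threshold $\max\{2a/\alpha,\ 8c\ln n/\alpha\}$ becomes exactly the value of $top$ declared in the corollary, and Lemma~\ref{l:alltiersStatic} guarantees that for any fixed graph $H$ equipped with a cutset-data-structure collection satisfying the invariants, the probability that its top-tier forest fails to be a spanning forest of $H$ is at most $1/n^{c'+d}$.

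I would then apply this bound to each of the $G^u$ individually. Let $E_u$ be the event that $F^u_{top}$ is not a spanning forest of $G^u$. Since the invariants are maintained after every update (by the correctness of the dynamic algorithm, established elsewhere in the paper), Lemma~\ref{l:alltiersStatic} applies to each $G^u$ and gives $\Pr[E_u] \le 1/n^{c'+d}$. Because $u \le n^d$, a union bound yields
\[
\Pr\!\left[\bigcup_{u \le n^d} E_u\right] \;\le\; n^d \cdot \frac{1}{n^{c'+d}} \;=\; \frac{1}{n^{c'}},
\]
which is the probability bound asserted by the corollary.

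The only point that needs care is the parenthetical remark about randomness being possibly shared across different $\mathcal{CD}^u$'s. The point here is that the union bound requires no independence: Lemma~\ref{l:alltiersStatic} bounds each marginal $\Pr[E_u]$ using only the randomness inside the single collection $\mathcal{CD}^u$, and $\Pr[\bigcup_u E_u] \le \sum_u \Pr[E_u]$ regardless of the joint distribution of the family $\{\mathcal{CD}^u\}_u$. So even though updating the data structure across $u$ naturally creates dependencies between consecutive $\mathcal{CD}^u$'s, this does not affect the argument. The main (mild) obstacle is therefore not probabilistic at all but rather the bookkeeping check that the invariants from Section~\ref{s:cutset} and the tier machinery of this section continue to hold after each of the $u$ updates; once this is in hand, Lemma~\ref{l:alltiersStatic} plus the union bound finish the proof.
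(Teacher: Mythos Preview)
Your proposal is correct and matches the paper's proof essentially verbatim: set $c=c'+d$ in Lemma~\ref{l:alltiersStatic}, apply it to each of the at most $n^d$ graphs, and take a union bound to get failure probability $n^d \cdot n^{-(c'+d)} = n^{-c'}$. Your additional remarks (that the union bound needs no independence, and that one must separately verify the invariants are maintained across updates) are accurate elaborations that the paper leaves implicit.
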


\begin{proof}
Consider a collection of $n^d$ graphs $G^i$.
Set $c=c'+d$
in Lemma \ref{l:alltiersStatic}. By a union bound, 
the probability that some $F^u_{top}$ fail to be a spanning forest of $G^i$ is no greater than
$n^d (1/n^c)=1/n^{c'}$. 
 \end{proof}

\subsection{Performing updates}

We say a fragment in a tier $\ell$ is {\it isolated} if it is equal to 
a fragment in tier $\ell+1$. 
To efficiently maintain the invariants as edges are updated,  we only check a cutset induced by a tree if the invariants could have become violated. The invariants are maintained from the bottom up. Changes to a tree may be needed because (1) a fragment becomes isolated when a tree edge is deleted; (2) for an isolated fragment $T$, $\Search(T)$ may become successful, when before it wasn't or (3) a fragment which is newly formed
is isolated.

Let $T(x)$ denote the tree containing node $x$ in a forest $F$. If edge $\{x,y\}$ is inserted, we start at the lowest tier, checking on each tier $\ell$ as follows:  if  $T(x) \neq T(y)$ this may affect $\Search(T(x),\ell)$ and $\Search(T(y),\ell)$. If $T(x)$ is isolated and 
$\Search(T(x),\ell) $ is successful, the edge returned by $T(x)$ is added to $F_{\ell+1}$. Once a tree edge is added it must be added in all higher tiers.
If this forms a cycle, a highest tier edge in the cycle is made into a non-tree edge. This may in turn cause an isolated fragment $T'$ on a higher tier $\ell'$, for which $\Search(T',\ell')$  must be checked. 
If edge $\{x,y\}$ is deleted, it does not affect the results of a search if both $x$ and $y$ are in the same fragment. If they are not,
then calls to $\Search$ for both $x$ and $y$ are potentially affected. At this point we
proceed as we do in the case of insertions.


\medskip

We now give the algorithms for handling updates and queries. The following procedure, which restores the invariants after an update involving an edge $\{x,y\}$, as described
above, is used for both insertions and deletions.

\begin{algorithm}[h!] 
\caption{$\Refresh(\{x,y\})$}
\label{a:restore}
\begin{algorithmic}[1]
\FOR{$\ell=0,\dots,top-1$}
\FOR{$u \in \{x,y\}$}
\IF{$T(u)$ is isolated and $\Search(T(u),\ell)$ returns an edge $e_u=\{a,b\}$}
\IF{a path between $a$ and $b$ exists on some tier $> \ell$} 
\STATE{Let $j$ be the lowest tier on which there is a path}
\STATE{Let $e'$ be an edge of maximum tier in the path on tier $j$}
\STATE{Call $\Insertedge(e',k)$ for each $k \ge j$}
\STATE{Call $\mathsf{Make\_tree\_edge}(e_u,\ell')$ for each  $\ell'$,
$top \ge \ell' > \ell$} \label{lineTree}
\ENDIF
\ENDIF
\ENDFOR
\ENDFOR
\end{algorithmic}
\end{algorithm}

\begin{itemize}
\setlength\itemsep{0em}
\item {\sf Initialize}:
For each tier $\ell=0,1,2,...,top-1$, initialize a cutset data structure $\mathcal{CD}_{\ell}$   where $F_{\ell}$ contains $n$ trees consisting of the single vertices in $G$.

\item
$\mathsf{Insert}(e)$:
For $\ell < top$, call $\Insertedge(e,\ell)$ to insert $e$ into $\mathcal{CD}_{\ell}$, 
and call $\Refresh(e)$.

\item
$\mathsf{Delete}(e)$:
\label{a:deletions}
For $\ell \le top$, call $\Deleteedge(e,\ell)$
to
delete $e$ from $\mathcal{CD}_{\ell}$, 
 and call $\Refresh(e)$.

\item
$\mathsf{Query}(x,y)$:
If $T(x)=T(y)$ in $F_{top}$, return ``yes", else return ``no". 

\end{itemize}
A proof of the following is given in the Appendix.
\begin{lemma} \label{l:invariants}
\label{correct}
Invariants (1), (2) and (3) are maintained by $\mathsf{Initialize}$, $\mathsf{Insert}$, and $\mathsf{Delete}$.
\end{lemma}

\old{
\subsection{Proof of Correctness}

In light of Corollary \ref{c:dynST}, it suffices to show that the update procedures maintain the invariants, to prove the correctness of the answers to queries as described in the main result.
\begin{lemma} \label{l:invariants}
\label{correct}
Invariants (1), (2) and (3) are maintained by $\mathsf{Initialize}$, $\mathsf{Insert}$, and $\mathsf{Delete}$.
\end{lemma}

\begin{proof}
It is clear that all invariants hold after a call to $\mathsf{Initialize}$. We now consider update operations.

Maintenance of Invariant (1) holds trivially because $\Refresh$ never makes an edge
into a tree edge on tier 0.

For Invariant (2), first note that $\Refresh$ checks for a cycle and removes an edge from a potential cycle before inserting an edge which causes a cycle in a tree, and so each $F_{\ell}$ is a forest. For the inclusion
property $F_{\ell} \subseteq F_{\ell+1}$,  first note that when handling deletions, the same edge is
deleted from every tier, so this does not violate the invariant. For both
insertions and deletions,
 line \ref{lineTree} of $\Refresh$ makes $e_u$ a tree edge on all tiers
$> \ell$, so inclusion is immediately maintained when no cycles are created. 
Now suppose edge $e_u$ would form a cycle if inserted.
 Let $e'$ be an edge of maximal tier $j$ in this cycle. 
We first note that $j > \ell +1$, due to the fact that $T(u)$ is isolated. So by
the minimality of $j$ and the maximality of the tier of $e'$, 
it must be the case that $e'$ is not an edge in the forest on any tier $< j$. Therefore, $e'$ is removed from all forests containing it, and the
 inclusion property of Invariant 2 is maintained.

Invariant (3) fails only if there is an isolated fragment $T$ on some tier $\ell$ for which $\Search(T,\ell)$ is successful. We call this a {\em bad} $T$.  Initially, if we start with an empty graph, there is no bad $T$.  We assume there is a first update during which a bad $T$ is created and show a contradiction. A bad $T$ is created either if (1) $T$ is newly formed by the update or (2) $T$ pre-exists the update and $\Search(T,\ell)$ is successful in the current graph $G$. In (2) there are two subcases: $T$ wasn't isolated before the update or $T$ was isolated before the  update but $\Search(T,\ell)$ became successful after the update. 
Let $\{x,y\}$ be the updated edge. 
We first show: \\

\noindent
{\it  Claim:} Before the call to $\Refresh$, the only bad fragment introduced on
any tier has the form $T(x)$ or $T(y)$,  and this remains the case during the execution of $\Refresh$.
\noindent
{\it Proof of Claim}\\

Before the call to $\Refresh$, there are two
ways in which this update could create a bad fragment. The first, which may
occur for both insertions and
deletions, happens when there is a fragment $T$ on a tier $\ell$ for which
$\Search(T,\ell)$ did not return an edge before the update, but now does.
This can only be the case if the cut induced by $T$ has changed, and this is only when exactly one of $x$ or $y$ is a node in $T$.
The second can happen before a $\Refresh$ and only for deletions, in particular when $\{x,y\}$ is removed. In this case, two new fragments  $T(x)$ and
$T(y)$ are created and no other cuts are changed.  Hence the claim is proved for the bad fragments created before the call to $\Refresh$. 

Now, we consider what happens during $\Refresh$. 
During the execution of $\Refresh$, a bad fragment may result when $e_u$ is made into a tree edge on some tier $j > \ell$ on Line \ref{lineTree}, causing
two fragments $T$ and $T'$ to be joined into one new fragment which may be bad. Since
$e_u$ is the result of $\Search(T(x),\ell)$ or $\Search(T(y), \ell)$ for some $\ell$, the new fragment contains $x$ or $y$, and by Invariant 2, so does every tree on higher tiers created by inserting $e_u$.   Note that replacing one edge by another edge in a cycle does not affect any cut, so the replacement of $e'$ by $e_u$ does not cause the creation of a new bad fragment. This concludes the proof of the claim.
\old{ 
it will create a fragment with the same set of nodes as a fragment that existed
before the update. So the cut induced by these fragments will be the same,
and the fragment resulting after deleting the cylce edge will be bad only if
the fragment was already bad, in which case it must have the form $T(x)$ or $T(y)$.} \\

Given the claim, we now show that after calling $\Refresh$, Invariant (3)
holds. The proof is by induction on the value of $\ell$ in the for-loop of $\Refresh$. Assume $\ell <top$.  Suppose that before iteration $\ell$, Invariant 3 holds on all tiers $< \ell$ or $\ell=0$. 
This will still be the case after iteration $\ell$, since this iteration only makes changes
on tiers $> \ell$. Suppose that $T(x) \in F_{\ell} $ is bad. This means that
$T(x)$ is isolated, and $\Search(T(x),\ell)$ returns an edge $e_u$.
Since $e_u$ is inserted into $F_{\ell+1}$,  $T(x)$ in tier $\ell$ is no longer bad,
and Invariant (3) holds on all tiers $\le \ell$.
The same argument applies to $T(y)$. Maintenance of Invariant (3) now holds by induction for all $\ell <top$.
\end{proof}

}

\subsection{Implementation details, running time and space analysis}
We briefly review basic data structures used.
\label{ss:ET}
\old{We use ET-trees from \cite{HK} to store each tree $T \in F$.} 
ET-trees are based on the observation that when a tree is linked with another tree, the sequence given by the Euler tour of the new tree can be obtained by a constant number of links and cuts to the sequences of the old trees. The sequence is stored in a balanced ordered binary tree.

\medskip

 \noindent
 {\it ET-tree properties:} If each node in the graph has a value stored in $x$ words of length $O(\log n)$ then the following hold.
 \begin{enumerate}
\setlength\itemsep{0em}
 \item
A tree edge can be deleted or inserted in $O(x\log n)$ time.

\item
The sum of the values stored in the nodes of a given tree can be returned in $O(x)$ time.
  
\item
 A value at a node can be updated in $O(x\log n)$ time per word. 
   
   \item
  Given a node $v$, the name of the tree $T(v)$ containing
$v$ can be returned in $O(\log n)$ time. 

\item
 The name of the tree can be obtained more quickly if one keeps an ET-tree of degree $O(\log n)$ solely for this purpose. That is, it can be obtained in $O(\log n/\log \log n)$ if one is willing to spend $O(\log^2 n)$ time doing an insertion or deletion of a tree edge. 
 
 \end{enumerate}

{\it ST-trees} (aka link-cut trees) provide a means of finding the maximum weight edge on a path between two nodes in a dynamic forest, in $O(\log n)$ time per tree edge link, cut, and the find operation. 
 ST-trees and ET-trees require only $O(n*|value| )$ space where $|value|$ is the size of the vector or value stored at each node or edge.

\bigskip

\noindent
{\it Implementation details:} $F_{top}$ is stored as a degree-$\lg n$ ET-tree where no values are associated with a node, so that queries can be answered in time $O(\log n/\log \log n)$.
$F_{top}$ is also stored  in an ST-tree where each tier $\ell$ tree edges has weight $\ell$. An ST tree data structure allows us to find the cycle edge in $O(\log n)$ time. 

In the worst case, a new tree edge is found in every tier twice and these changes are filtered up, for a total of
$2 \sum_{\ell} top-\ell$  tree edge insertions or deletions:

\begin{lemma} \label{l:analysis}

Let $t(n)$ be the update time for a cutset data structure and $s(n)$ be the space. Then the running time and space for
the fully dynamic connectivity algorithm is $O((top^2 *t(n)+ top*log n)$ and the space needed is $O(top*s(n)+ n)$. 
\end{lemma}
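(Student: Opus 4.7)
The plan is to analyze space and time separately, and for the time bound to reduce the entire cost of $\Refresh$ to the total number of tree-edge insertions or deletions that a single update triggers across the cutset data structures.

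For space, the algorithm maintains $top$ cutset data structures, contributing $O(top \cdot s(n))$ words. The only additional storage is the high-degree ET-tree holding the rooted structure of $F_{top}$ (used only to answer queries in $O(\log n/\log\log n)$ time) and the ST-tree on $F_{top}$ whose edges carry their tier index; both store $O(1)$ words per vertex or edge, contributing $O(n)$ words. Summed, this gives $O(top \cdot s(n) + n)$.

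For update time, an $\mathsf{Insert}(e)$ or $\mathsf{Delete}(e)$ first issues $O(top)$ calls to $\Insertedge$ or $\Deleteedge$, one per tier, costing $O(top \cdot t(n))$. It then invokes $\Refresh(e)$, which I would analyze tier by tier. The outer loop of $\Refresh$ runs $top$ times, and each iteration performs a constant number of $\Search$ calls ($O(t(n))$ each) together with one $a$--$b$ path existence/maximum-tier-edge query on the top-tier ST-tree ($O(\log n)$). These per-iteration contributions sum to $O(top \cdot t(n))$ and $O(top \cdot \log n)$ respectively. The dominant cost is the cascade: whenever $\Search(T(u),\ell)$ returns an edge $e_u$, the algorithm calls $\mathsf{Make\_tree\_edge}(e_u,\ell')$ on every tier $\ell' > \ell$ and, when a cycle is formed, converts the replaced cycle edge $e'$ on every tier $k \geq j$, for at most $2(top-\ell)+1$ cutset operations charged to this invocation. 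Summing over both branches $u \in \{x,y\}$ and over all $\ell$ yields at most
\[
4\sum_{\ell=0}^{top-1}(top-\ell) + 2\,top \;=\; O(top^2)
\]
cutset tree-edge operations per update, each of cost $O(t(n))$. Combined with the ST-tree work this gives the claimed bound $O(top^2 \cdot t(n) + top \cdot \log n)$.

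The only delicate point, and the main obstacle, is to rule out that the $\Insertedge(e',k)$ steps that demote the replaced cycle edge $e'$ seed additional isolated bad fragments that must be repaired in later iterations of the outer loop, which could inflate the cascade count above $O(top^2)$. This is exactly what the Claim inside the proof of Lemma~\ref{l:invariants} gives: swapping $e'$ for $e_u$ inside a cycle alters no cut, so every cascaded operation is accounted for in the tier-$\ell$ budget where it originated. Once that is cited, the $O(top^2)$ bound on cascaded cutset operations is immediate and Lemma~\ref{l:analysis} follows.
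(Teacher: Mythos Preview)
Your analysis is correct and matches the paper's (very terse) argument, which just observes that in the worst case a new tree edge is found twice per tier and these changes are filtered up, giving $2\sum_\ell (top-\ell)=O(top^2)$ tree-edge insertions/deletions; you supply the surrounding details the paper omits.

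One clarification: your final paragraph is unnecessary for the running-time bound. The $O(top^2)$ count on cascaded cutset operations follows purely from the control structure of $\Refresh$---the outer loop visits each tier once, the inner loop examines only $T(x)$ and $T(y)$, and each examination triggers at most $O(top)$ cutset calls---regardless of whether the swap of $e'$ for $e_u$ creates new bad fragments elsewhere. The Claim in the proof of Lemma~\ref{l:invariants} is needed for \emph{correctness} (to guarantee that checking only $T(x)$ and $T(y)$ suffices to restore the invariants), not to bound the work $\Refresh$ actually performs.
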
 

Since $top=O(\log n)$, Lemma \ref{l:analysis} and Lemma \ref{l:cutset} imply Theorem \ref{t:main}.

\old{

\subsection{Implementation details and running time and space analysis}
\old{
There is a table $A[x,y, i, \ell]$ for each possible pair of nodes $x$ and $y$, level $i$, and tier $\ell$. To represent this succinctly,
for each $\ell$ and each edge $e$, a sequence of $ levelNum$ fair coinflips are generated and $e$ is included in level $i$ iff the first $i+1$ coinflips are 1. 
 The table may be stored as a binary seach tree where only the edges present in the current graph are keys. Associated with each edge $e$ is an $\lceil{\lg(levelNum)}\rceil \times top$ array $B(e)$ where $B(e)[i, \ell]$ is the maximum level into which $e$'s edge name is added in $\mathcal{CD}_{\ell}$. Hence the space required to represent $A$ is $O(m \log n \log\log n)$ bits or $O(m \log \log n)$ words.

Each $T \in F_{\ell} \in \mathcal{CD}$, $\ell<top$  is represented by a (binary) ET-tree as described in Section \ref{ss:ET}.
$F_{top}$ is represented by a degree $\lg n$ ET tree.   Each node in an ET tree
tree maintains $O(levelNum)$ sums, each contained in a word, so the total space used by  ET trees
is $O(n*levelNum)$ words per tier or $O(n \log^2 n)$ words overall.

Let $V=\{0,1,2,...,n-1\}$. For each $v \in V$ and tier $\ell$, we maintain a pointer to a location where the values for $v$ in $F_{\ell}$ are stored.

An insertion of an edge requires the flipping of $levelNum * top $ coins. An insertion or deletion of an edge $e$ requires the insertion or deletion of a key from $A$ which, together with the modification of the array $B(e)$, has a cost of $O(\log n+\log n \log\log n)$. All of this is dominated by the cost of change to the ET trees.  
For each endpoint of $e$, we require a change to the $levelNum$ vectors stored at the endpoint in each $\mathcal{CD}_{\ell}$. The resulting change to the ET tree can be processed in $O(\log^2 n)$ time per tree containing $v$ or $O(\log^3 n)$ time over all tiers $\ell$. }

Each insertion or deletion of an edge requires an insertion or deletion into  $O(\log n)$ $\mathcal{CD}$'s.

\old{
Given a node $x$, finding the root of the tree $T(x)$ (where $\Search(T(x))$ and the number of nodes in $T(x)$ are stored) requires $O(\log n)$ if $\ell < top$. }
 Each query is performed on  $F_{top}$ by comparing the roots of the trees containing the nodes being queried; this requires $O(\log n/\log\log n)$ time . 

$\Refresh$ is the bottleneck in terms of time. 
To analyze $\Refresh$, we first note that $T(u) \in F_{\ell}$ is isolated iff the number of nodes in $T(u) \in F_{\ell}$ equals the number of nodes in $T(u) \in F_{\ell+1}$, and this requires finding the root of each and comparing the sizes, which can be done in $O(\log n)$ time.

For the sole purpose of finding the maximum tier edge on a path between two nodes we can use link-cut trees (also known as Sleator-Tarjan trees \cite{ST}) to represent $F_{top}$.  Link-cut trees provide a means of finding the maximum weight edge on a path between two nodes in a  dynamic tree. It uses $O(\log n)$ time per tree edge addition, deletion, edge relabeling, and find operation. 
Using the link-cut tree representation for $F_{top}$, we can maintain a labelling of each tree edge  with its tier number and a maximum weighted edge on the path between $v$ and $w$ can be returned in $O(\log n)$ time. Maintaining one link-cut tree increases does not affect the asymptotic cost of the algorithm in terms of space or time. 

For each tier: in the worst case, $\Refresh$ determines twice if there is an isolated tree in $O(\log n)$ time, returns a maximum tier edge on a path in $O(\log n)$ time, and in each of the higher tiers, it may insert a tree edges and/or delete a tree edge. The cost of this is dominated by the last action. Each insertion or deletion of a tree edge in $F_{\ell}$ requires a $\mathsf{Make\_tree\_edge}$ or $\mathsf{Make\_nontree\_edge}$ for each tier $F_{j}$, $j > \ell$. Hence there may be $O(\log^2 n)$ calls overall, to $Make\_tree\_edge$ and/or $Make\_nontree\_edge$,

 \old{

Each cutset data structure is represented by an array of nodes as described previously, indexed
by tier.  Determining the lowest unmatched ancestor of a vertex can be implemented by checking each tier to determine if the tree containing  $u$ in that tier is unmatched. One way to determine if a tree is unmatched is to maintain the number of vertices in the tree and check if its parent has the same size, which can be done with ET-trees. 

Using ST-trees \cite{ST} to represent the top tier forest $F$ in which each tree edge is labelled with its tier number,  $Reconnect$ can find a maximum weighted edge on the path from $v$ to $w$ in $O(\log n)$ time. 

}

\begin{lemma}\label{cost1} The total cost of initialization of an empty graph is $O(n \cdot levelNum \cdot top)=O(n \log^2 n)$. 
The worst case cost of an update is $O(\log^4 n)$ and the cost of a query is $O(\log n/\log \log n)$. The space usage is $O(m \log \log  n+ n \log^2 n)$ words.
\end{lemma}

\old{
\begin{proof}
For the deletion of a tree edge, there may be two $Reconnect$s per tier. Each $Reconnect$ requires one $search$, and up to $\log n$ links and cuts;
the first costs $O(\log n)$. Each link and cut of a tree storing $O(\log^2 n)$ words per node costs $O(\log^3 n)$ so the total cost of linking and cutting per $Reconnect$ is $O(\log^4 n)$. Each search requires a total cost of $O(\log^3 n/\log \log n)$, since it involves the testing of up to $\log n$ versions each containing $\log n$ edges. Each test requires $\log n/\log \log n$ time to check.  Since there are no more than two $Reconnects$ per tier, the overall cost is $O(\log^5n)$.  \end{proof} 
An edge insertion requires an insertion into $\log n$ cutset data structures.  Each costs $O(\log^3 n)$ for a total of $O(\log^4 n)$. If the edge becomes a tree edge, it
 is inserted into trees in $\log n$ tiers, for a cost of $O(\log^3 n)$ per tier or $O(\log^4 n)$ overall. 
 
 The maintainence of the ST-trees costs $O(\log n)$ per change to the forest or $O(\log^2 n)$ overall for a deletion update, since $O(\log n)$ tree edges may change. The maintenance of the 
 ET-trees of degree $\log n$ for the purpose of answering queries is $O(\log^2 n)$ per change to the forest or $O(\log^3 n)$ overall for a deletion update. 
 
 The query time is $O(\log n/\log \log n)$ using degree $\log n$ ET-trees.
 }

}
\section{Implementation of the Cutset Data Structure }\label{s:cutset}
To illustrate the main idea of our data structure, we will first explain a simplified deterministic version which works when there is exactly one edge in each cutset. 

Let $G=(V,E)$ be the current graph. 
Each edge $\{x,y\} \in G$, $0 \le x<y < n$, has the {\it  name} $x_b \cdot y_b$ where $x_b$ and $y_b$ are the $\lceil \lg n \rceil$
bit vectors containing the binary representations of nodes $x$ and $y$ respectively and $x_b \cdot y_b$ denotes the $2 \lceil \lg n \rceil$ bit vector $x_b$ followed by $y_b$. We say that edge $\{x,y\}$ is {\it given by} $x_b \cdot y_b$.  Each node will maintain a vector equal to the bitwise
XOR (i.e., sum mod 2) of the vectors assigned to each of its incident edges
and we use summation notation to denote bitwise XOR of two or more vectors.
For each tree $T \in F$, the ET-tree  for $T$ maintains the bitwise XOR of the vectors stored at $T$'s nodes.
We use ``addition" or ``subtraction" of vectors to refer to the bitwise XOR of the vectors.
The operations are implemented as follows:

\begin{itemize}
\setlength\itemsep{0em}
\item 
$\mathsf{Insert\_edge}(x,y )$: Add its name $x_b \cdot y_b$ to the vectors at nodes $x$ and $y$.

\item
$\mathsf{Make\_tree\_edge}(x,y)$: Insert $\{x,y\}$ into $F$ (by linking the ET-trees representing $T(x)$ and $T(y)$).

\item $\mathsf{Make\_nontree\_ edge}(x,y)$: Delete $\{x,y\}$ from $F$ by removing it from the ET-tree that contains it. 

\item
$\mathsf{Delete\_ edge}(x,y)$: Subtract vector $x_b \cdot y_b$ from the vectors at the nodes $x$ and $y$. If the edge is in $F$, remove it from $F$ by calling $\mathsf{Make\_nontree\_edge}(x,y)$.

\item
$\mathsf{Search}(T)$: . Let ${z}=z_1z_2...z_{ 2\lceil \lg n \rceil}$ be the sum of the vectors of nodes  in $T$.  If ${z}\neq \vec{0}$ then return ${z}$. 

\end{itemize}
\begin{lemma}
\label{basic-observation}
 For any tree $T \in F$, if there is exactly one edge in the cutset $C$ of the cut $(T, V\setminus T)$, then $\Search(T)$ is always successful.  The amount of space used is $O(n\log n)$ bits. 
\end{lemma}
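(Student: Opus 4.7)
The plan is to exploit the XOR (bitwise sum mod 2) structure of the stored vectors so that every edge with both endpoints inside $T$ cancels, leaving only a contribution from cutset edges. Concretely, I would track, for a given $T \in F$, what the quantity $z = \sum_{v \in T} vec(v)$ equals, where $vec(v)$ is the vector maintained at $v$.

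First I would observe that by the definitions of $\mathsf{Insert\_edge}$ and $\mathsf{Delete\_edge}$, at any point in time $vec(v)$ equals the XOR of names of all currently present edges incident to $v$. Therefore $z = \sum_{v \in T} \sum_{e \ni v, e \in E} \mathrm{name}(e)$. I would then split the inner contributions into two classes: edges $e = \{x,y\}$ with both $x, y \in T$ contribute $\mathrm{name}(e)$ exactly twice (once from $x$, once from $y$), so they cancel modulo $2$; edges $e$ with exactly one endpoint in $T$, i.e.\ those in the cutset $C$, contribute $\mathrm{name}(e)$ exactly once. Hence $z = \sum_{e \in C} \mathrm{name}(e)$.

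When $|C| = 1$, say $C = \{e^*\}$, this yields $z = \mathrm{name}(e^*) \neq \vec{0}$, so $\mathsf{Search}(T)$ returns precisely the name of the unique cutset edge and is therefore successful. The one sanity check I would make explicit is that the ET-tree really does deliver the sum $\sum_{v \in T} vec(v)$ on demand, which is exactly the ``sum of values stored in the nodes of a given tree'' operation from the ET-tree properties.

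For the space bound I would simply note that each node $v$ stores a single vector of length $2\lceil \lg n \rceil$ bits, contributing $O(n \log n)$ bits over all nodes; the ET-tree augmentation requires storing a partial XOR of the same length at each internal ET-tree node, which again totals $O(n \log n)$ bits by the standard ET-tree size. There is no hard obstacle here: the only mildly subtle point is verifying that the XOR cancellation argument is undisturbed by tree/non-tree status, since $\mathsf{Make\_tree\_edge}$ and $\mathsf{Make\_nontree\_edge}$ do not alter any $vec(v)$, only the ET-tree structure; I would state this explicitly so the reader sees that only $\mathsf{Insert\_edge}$ and $\mathsf{Delete\_edge}$ change the vectors being summed.
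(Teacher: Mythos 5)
Your proposal is correct and follows essentially the same argument as the paper: the parity/cancellation observation that internal edges of $T$ contribute their name twice to the XOR sum $z$ and hence vanish, while cutset edges contribute once, so $z$ equals the name of the unique cutset edge. The extra remarks you add (that $\mathsf{Make\_tree\_edge}$ and $\mathsf{Make\_nontree\_edge}$ do not touch the vectors, and the explicit accounting of $O(n\log n)$ bits) are consistent with, and slightly more detailed than, the paper's proof.
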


\begin{proof}
Every edge with two endpoints in a tree $T$ is added to the vectors of two nodes in $T$ and contributes twice to  ${z}$, thus it appears not all in the sum ${z}$, while every edge with exactly one endpoint contributes exactly once to ${z}$. If the cutset has exactly one edge, ${z}$ is the name of the edge. 
\end{proof}

\subsection{Extension to cutsets with more than one edge} 
We extend our data structure to handle the case of when  the cutset of $(T,V\setminus T)$ has more than one edge. 
We may do this in sublinear space, but then we require the assumption that no edge is deleted unless it is currently in the graph and the number of updates is bounded by $n^d$ for some constant $d$. Alternatively, we can keep a list of edges currently in the graph.

We maintain for each node $x \in V $ an array indexed by {\it levels} $i$ where $0\leq i < \lceil{2\lg n}\rceil=levelNum$.  
Let $s_{i}(x)$ denote the vector on level $i$ for node $x$.
When edge $\{x,y \}$ is inserted into $G$, for  each $i = 0,1,...,levelNum-1$ with probability $1/2^i$ we add $x_b \cdot y_b$ to  both $s_{i}(x)$ and $s_{i}(y)$. 
The idea is that when the size of the cutset is $\approx 2^i$,  with constant probability there is exactly one edge from the cutset and a unique $x$ in $T$, such that the name of the edge is added to $s_{i}(x)$.  In this case, $z_i=\sum_{x \in T} s_{i}(x)$ is the name of that one edge and $\Search(T)$ is successful. 

To decide which edges to included in the sums on the various levels without recording each coinflip, we use a 2-wise independent hash function $h$ mapping
edges into $[2^{levelNum}]$, where for $k \ge 1$, $[k]$ denotes
$\{1,\dots,k\}$.
An edge $e$ is sampled in level $i$, $i=0,...,levelNum$ of the cutset data structure iff  $h(e) \leq 2^i$.
 

The following lemma is straightforward and has been observed previously. A proof is in the Appendix.
\begin{lemma}  \label{l:findany} Let $W$ be the cutset of $(T, V \setminus T)$. With probability $1/8$,
there is an integer $j$
such that exactly one $e\in W$ hashes to a value in $[2^j]$.
\end{lemma}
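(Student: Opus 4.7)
\medskip
\noindent
\textbf{Proof proposal for Lemma \ref{l:findany}.}
The plan is to use the standard ``geometric doubling'' idea from $L_0$ sampling and verify the needed inequalities with pairwise independence. Let $w = |W|$, and for each level $j \in \{0,1,\dots,levelNum\}$ let $p_j = 2^j/2^{levelNum}$, so that $\Pr[h(e) \le 2^j] = p_j$ for every edge $e$. Let $X_j$ denote the number of edges of $W$ that hash into $[2^j]$; by linearity $\mu_j := \mathbb{E}[X_j] = w p_j$. The goal is to exhibit a single level $j^\star$ at which $\Pr[X_{j^\star} = 1]$ is already at least $1/8$, from which the lemma follows by existential quantification over $j$.

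First I would pin down a level at which $\mu_j$ lies in the ``sweet spot'' $[1/4, 1/2]$. Since $levelNum = \lceil 2\lg n\rceil$, we have $w \le \binom{n}{2} \le 2^{levelNum-1}$, so $\mu_0 = w/2^{levelNum} \le 1/2$; on the other hand, assuming $W$ is nonempty (else the lemma is vacuous), $\mu_{levelNum} = w \ge 1$. Because $\mu_{j+1} = 2\mu_j$ exactly, the first index $j^+$ with $\mu_{j^+} \ge 1/2$ satisfies $\mu_{j^+} \in [1/2,1)$, and therefore $j^\star := j^+ - 1 \ge 0$ satisfies $\mu_{j^\star} \in [1/4, 1/2)$.

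Second, I would lower-bound $\Pr[X_{j^\star}=1]$ via inclusion-exclusion, using pairwise independence of $h$. Writing $A_e = \{h(e)\le 2^{j^\star}\}$, for each $e \in W$,
\[
\Pr\!\left[A_e \cap \bigcap_{e'\neq e}\overline{A_{e'}}\right]
\;\ge\; \Pr[A_e] - \sum_{e'\neq e}\Pr[A_e\cap A_{e'}]
\;=\; p_{j^\star} - (w-1)p_{j^\star}^2,
\]
where the last step uses 2-wise independence. Summing over the $w$ edges of $W$ and setting $\mu = \mu_{j^\star}$ gives
\[
\Pr[X_{j^\star}=1] \;\ge\; w p_{j^\star} - w(w-1)p_{j^\star}^2 \;\ge\; \mu - \mu^2.
\]
For $\mu\in[1/4,1/2]$ the function $\mu-\mu^2 = \mu(1-\mu)$ is minimized at the endpoints and achieves value $3/16$ at $\mu = 1/4$, so $\Pr[X_{j^\star}=1] \ge 3/16 > 1/8$, completing the proof.

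The only nontrivial point is confirming that a level with $\mu_j \in [1/4,1/2]$ always exists in the allowed range $\{0,\dots,levelNum\}$; this is where the choice $levelNum = \lceil 2\lg n\rceil$ (together with the trivial bound $w \le n^2/2$) is essential, and it is the step I expect to require the most care. Everything else reduces to a one-line Bonferroni bound that only uses pairwise independence, so no stronger properties of $h$ are needed.
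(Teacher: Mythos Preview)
Your proposal is correct and follows essentially the same route as the paper's proof: both fix a single level $j$ at which the sampling probability $p_j$ satisfies $p_j|W|\in(1/4,1/2)$ and then apply a Bonferroni/union bound using only pairwise independence to show $\Pr[X_j=1]\ge \mu-\mu^2>1/8$. Your presentation via $\mu-\mu^2$ is a bit cleaner and even yields the slightly stronger constant $3/16$; the only cosmetic nit is that your inequality $\mu_0\le 1/2$ is in fact strict (since $\binom{n}{2}<2^{levelNum-1}$), which is what guarantees $j^\star=j^+-1\ge 0$.
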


\old{If we keep a list of current edges in the graph (using space linear in the number of edges) then we can easily check deterministically if $z_i$ the name of an edge and if the edge is in a cut. Alternatively, we can save space.}
To verify with high probability that $z_i$ is actually the name of one edge  in the cut rather than the sum of several names, 
we make use of an {\it odd hash function} \footnote{ We may view this verification problem as an instance of {\em 1-sparse recovery} and it 
is possible to apply known techniques for this problem in our setting. In particular \cite{CormodeM} proposes an approach which uses polynomial evaluation modulo
a prime $p$ to verify uniqueness for recovered values (from a set of size $t$,) with success probability $(1-t/p)$. This method maintains  sums of the form $\sum_{i\in S}a_iz^i$, where $S \subseteq [t]$,  $a_i $ is the {\em weight} of item $i$, and $z \in \mathbb{Z}_p$ is  randomly chosen.
Using this approach for our purposes, in order for $\Search$ to maintain correct values with probability $(1-1/n^c)$ over a sequence of $n^d$ edge updates, we must work over $p=\omega(n^{c+d+2})$. Adding an edge would require adding (mod $p$) a term $\pm z^i$, where $1 \le i \le \binom{n}{2}$. To the best of our knowledge,
this requires performing roughly $2\lg n$ multiplications of $(c+d+2)\lg n$-bit numbers, with
reduction modulo $p$, or alternatively, $c+d$ independent parallel repetitions of the technique with $(c+d)$ primes of size $> 4 \lg n$ bits.  In contrast, our approach may be done with $(c+d)\lg n$ multiplications of $2 \lg n$-bit random numbers, with reduction modulo a power of $2$.}

We say that a random hash function $f:[1,m]\rightarrow
\{0,1\}$ is {\em $\epsilon$-odd}, if for any given non-empty set
$S\subseteq [1,m]$, there are an odd number of elements in $S$ which
hash to 1 with probability $\epsilon$, that is,
\begin{equation}\label{eq:non-zero}
\Pr_f\left[\sum_{x\in S} f(x)=1\mod 2 \right]\geq\epsilon.
\end{equation}

We use the construction of 
 \cite{thorup-sample} to create a $(1/8)$-odd hash function. Let $m\leq 2^w$, where $w=2\lceil\lg n\rceil$. We pick uniformly at random
an odd multiplier  $k$ from $[1,2^w]$ and a threshold $t\in [1,2^w]$.
From these two components, we define $f:[1,2^w]\rightarrow \{0,1\}$ as follows:
\begin{eqnarray*}
f(x)& = & 1 \hbox{ if  }(kx\bmod 2^w)\leq t \hbox{ and } x \neq 0 \\
       & = & 0  \hbox{ otherwise.}
\end{eqnarray*}

The idea of the verification is as follows:  we use a 2-wise independent function to randomly partition a set into two and test each part using using an odd hash function to determine if both parts have at least one element. Repeating this 
$O(\log n)$, if all tests fails, then with high probability there cannot be more than one element in a set. 

In particular, letting $S = [\binom{n}{2}]$,
the algorithm is initialized by choosing the following functions independently at random.
\begin{itemize}
\setlength\itemsep{0em}
\item
 $h_j: S \rightarrow \{0,1\}$,  $j=1,2,...,c\lg n$. Each $h_j$ is a 2-wise independent hash function. 

\item
$f_{j,b}: S \rightarrow \{0,1\}$ for $j=1,2,..., c\lg n$, and $b=0,1$. Each $f_{j,b}$ is an independently random $(1/8)$-odd hash function. 
 \old{
 $d_{j,a}: S \rightarrow S$   for $j=1,2,..., c\lg n$ and $a=0,1$,   k$ is a  random odd number between $1$ and $2^\lceil{n \choose 2} \rceil$. 
  \item
 $t_{ja}$: for $j=1,2,..., c\lg n$ and $a=0,1$, a random element of $S$.
}
\end{itemize}
The $f_{j,b}$'s are used to determine a $2c\lg n$-bit {\em tag} of
an edge at a particular level.
For each node $x$, we maintain an auxilliary vector
$s'_{i,j,b}(x)$.  The tag of an edge is added (by bitwise exclusive-or) to the vector $s'_{i,j,b}(x)$ when the edge name is added to $s_i(x)$. For each level $i$,  the sums over the nodes in a tree of the auxillary vectors are maintained as well as the sums of the vectors. Let $z'_{i,j,b} =\sum_{x \in T} s'_{i,j,b}(x)$.  The specifics are described below. 

\old{
If for any f both bits in a pair equal 1, this serves as proof that $z_i$ is the sum of more than one name, while if no pair has this property, then with high probability $z_i$ is the name of one edge. 
 The tag of an edge is added (by bitwise exclusive-or) to the vector $s'_i(x)$ when the edge name is added to $s_i(x)$. The sums over the nodes in a tree of the auxillary vectors are maintained as well as the sums of the vectors. We let $z'_i =\sum_{x \in T} s'_{i}(x)$.  The auxillary vectors are simple and fast to construct. 
 }
 \old{
 Alternatively, if the current list of edges in the graph is maintained in a binary search tree, then there is no need to use a tag to verify that $z_i$ is an edge in the cut. One can deterministically find out if $z_i$ is in the list of current edges and use the ET-tree representation of $F$ to test that one of its endpoints is in $T$ and another lies outside $T$.  }

The operations $\mathsf{Make\_tree\_edge}(x,y)$ and 
$\mathsf{Make\_nontree\_ edge}(x,y)$ are implemented as in the case where
cutsets have size one. The remaining operations are implemented as follows: 

\begin{itemize}
\setlength\itemsep{0em}
\item
\begin{tabbing}
$\mathsf{Insert\_edge}(x,y)$:\\
For  $i=0,1,2,\ldots, levelNum-1$\\
\ \ \= If $h(e) \leq 2^i$ then \\
\> \ \  --\{{\it Add $e$ to sample} \} add its name $x_b \cdot y_b$ to value $s_{i}(x)$ and $s_i(y)$ and \\
\> \ \ \= For $j=1,...,c'\lg n$ and $b=0,1$:\\
\> \> \ \  --\{{\it Add tag for $e$:}\}  If $h_j(e)=b$ and $f_{j,b}(e)=1$ then add 1 to $s'_{i,j,b}(x)$ and $s'_{i,j,b}(y)$. 
\end{tabbing}

%
%

\item
\begin{tabbing}
$\mathsf{Delete\_ edge}(x,y)$:\\
For  $i=0,1,2,.., levelNum -1 $:\\ 
\ \ \= If $h(e) \leq 2^i$ then \\
\> \ \ --subtract its name $x_b \cdot y_b$ to value $s_{i}(x)$ and $s_i(y)$ and \\
\> \ \ \=  --For  $j=1,...,c'\lg n$ and $b=0,1$:\\
\> \> \ \ \ \ If $h_j(e)=b$ and $f_{j,b}(e)=1$ then subtract 1 from $s'_{i,j,b}(x)$ and $s'_{i,j,b}(y)$.\\
If the edge is in $F$, call $\mathsf{Make\_nontree\_edge}(x,y)$ to remove it from the ET-tree containing it. \\
\end{tabbing}
\item
\begin{tabbing}
$\mathsf{Search}(T)$:\\
For $i=0, 1, 2, ..., levelNum-1$:\\ 
\ \ \= While an edge in the cutset $(T, V\setminus T)$ has not been found, test $z_{i}$ 
as follows:\\
\> \ \ \= Let $\{z^1,z^2\}$ be the edge given by the minimum $i$ such that  $z_{i} \neq 0$. \\
\> \>  If there is no pair $j$ such that $z'_{i,j,0}=z'_{i,j,1}=1$, then it is an edge in the cutset and returned. \\
\> \> Else $null$ is returned.  
\end{tabbing}

\end{itemize}
 
\old{
\subsection{Random odd hash functions}
As a method to sample edges, we use the concept of an {\em odd} hash
function:
We say that a random hash function $h:[1,m]\rightarrow
\{0,1\}$ is {\em $\epsilon$-odd}, if for any given non-empty set
$S\subseteq [1,m]$, there are an odd number of elements in $S$ which
hash to 1 with probability $\epsilon$, that is,
\begin{equation}\label{eq:non-zero}
\Pr_h\left[\sum_{x\in S} h(x)=1\mod 2 \right]\geq\epsilon.
\end{equation}

An odd hash function is
a type of ``distinguisher" described in
 \cite{thorup-sample}; we use their construction here. Let $m\leq 2^w$. We pick uniformly at random
an odd multiplier  $a$ from $[1,2^w]$ and a threshold $t\in [1,2^w]$.
From these two components, we define $h:[1,2^w]\rightarrow \{0,1\}$ as
\begin{eqnarray*}
h(x)& = & 1 \hbox{ if  }(ax\bmod 2^w)\leq t\\
       & = & 0  \hbox{ otherwise.}
\end{eqnarray*}
From  \cite{thorup-sample} we see
that $h$ is an $(1/8)$-odd hash function.}

\old{
\begin{lemma} \label{l:error}
\label{fixedcut}
Let $c>0$ be any constant.  Given a CutSet Data Structure for graph $G$ with forest $F$, let $C$ be a cutset for cut $(T,V\setminus T)$ and $T\in F$. Then if
 $C \neq \emptyset$, with constant probability $p\geq 1/8 -1/n^c$, $\Search(T)$ returns an edge in the cutset; with probability no greater than $1/n^c$ it returns a non-edge; otherwise it returns $null$. If $C=\emptyset$, it always returns $null$.
\end{lemma}
}

Let ${z_i} $ be the sum of the level $i$ vectors for a tree $T$, and $i$ be the minimum level such that ${z_i} \neq  \bar{0}$. Then with constant probability $p$, ${z_i}$ is the name of an edge in the cut $(T, V\setminus T)$. If ${z_i} \neq 0$ then if 
if ${z_i}$ is not the name of an edge in $G$, then with high probability, for some pair $j$, $z'_{i,j,0}=z'_{i,j,1}=1$. 
\old{
 Let $h$ be a 2-wise independent function from a universe $U$ into
$[2^\ell]$ for some $\ell \geq 2$. Let $W\subseteq U$ s.t.
$0<  |W|<2^{\ell-1}$. For  $k$ is a positive integer, let  $[k]$ denote $\{1,...,k\}$. 

The following lemma is straightforward and has been observed previously, see, e.g.,  \cite{KKT}. 
\begin{lemma}  \label{l:findany} With probability $1/8$,
there is an integer $j$
such that exactly one $w\in W$ hashes to a value in $[2^j]$.
\end{lemma}
\old{
We prove the statement of the lemma for $j=\ell-\lceil{\lg |W|}-1 \rceil$.
Then $1/(4|W|)<2^j/2^\ell < 1/(2|W|)$. 
\begin{align*}
\Pr_{h}&[\exists !\, w\in W: h(w)\in [2^j]]\\[-1mm]
&= \sum_{w\in W} \Pr_{h}[h(w)\in [2^j]
\wedge \forall w'\in W\setminus\{w\}: h(w')\not\in [2^j]]\\[-3mm] \\
& \hbox{We evaluate the inside of the summation: }\\
\old{
&A(w) \hbox{ denote } \Pr_{h} [h(w)\in [2^j]]:\\
&=  \sum_{w\in W} A(w) \Pr_{h}[\forall w'\in W\setminus\{w\}:h(w')\not\in [2^j] \mid h(w)\in [2^j]]\\[1mm]
&\geq \sum_{w\in W} A(w)
(1-
\sum_{w'\in W\setminus\{w\}}\Pr_{h}[h(w')\in [2^j]\mid h(w)\in [2^j]])\\
 &\hbox{By 2-wise independence:} \\
&=\sum_{w\in W} A(w)
(1-
\sum_{w'\in W\setminus\{x\}}\Pr_{h}[h(w')\in [2^j]])\\ 
&= |W|(2^j/2^\ell\cdot (1-(|W|-1)2^j/2^\ell)) \\
&>|W|/(4|W|)(1-|W|/(2|W|)=1/8.}
&=  \Pr_{h} [h(w)\in [2^j]]\Pr_{h}[\forall w'\in W\setminus\{w\}:h(w')\not\in [2^j] \mid h(w)\in [2^j]]\\[1mm]
&\geq  \Pr_{h} [h(w)\in [2^j]]
(1-
\sum_{w'\in W\setminus\{w\}}\Pr_{h}[h(w')\in [2^j]\mid h(w)\in [2^j]])\\
 &\hbox{By 2-wise independence:} \\
&=  \Pr_{h} [h(w)\in [2^j]]
(1-
\sum_{w'\in W\setminus\{x\}}\Pr_{h}[h(w')\in [2^j]])\\ 
&= (2^j/2^\ell\cdot (1-[(|W|-1)2^j/2^\ell]) \\
&>(1/(4|W|)(1-[(|W|-1)/(2|W|)] >  1/(4|W|)(1/2)]=1/(8|W|)\\
\hbox{Then} \Pr_{h}&[\exists !\, w\in W: h(w)\in [2^j]]> 1/8.
\end{align*}
}
}

If ${z_i} \neq 0$ there is must be at least one edge in the cut, since otherwise $h(e)=\bar{0}$ for all edges $e$. 
If there is more than one edge in the cut, let $e$ and $e'$ be two such edges. For $b=0,1$, let  $S_b=\{e ~|~h_j(e)=b\}$. Since each $h_j$ is 2-wise independent then for any $j$, $h_j(e) \neq h_j(e')$ with probability 1/2, and therefore with probability 1/2, $|S_0|> 0$ and $ |S_1| >0$. 
Recall that $h_j, f_{j,0}$ and $f_{j,1}$ are independently chosen random hash functions. Therefore, given that $|S_0| >0$ and $|S_1| >0$,  
$z'_{i,j,0}=z'_{i,j,1}=1$ with probability $(1/2)(1/8)^2=1/128$.
The probability that this equality fails for all $j=1,2,..., c(128) \lg n$ pairs is $(1-1/128)^{c (128 \lg n)}$ which is less than $1/n^c$ for sufficiently large $n$.  Therefore w.h.p., when $\Search$ returns an edge it is an edge in the cut. 
The probability of correctness as stated in Lemma \ref{l:cutset} then follows. 

\old{

\begin{theorem} \label{t:main}
For any polynomial sequence of updates and queries, the low space data structure answers all queries correctly with high probability and uses $O(\log^4 n)$ time per operation and $O(n \log^3 n)$ bits of space. 
\end{theorem}

\begin{proof}
The correctness follows from Lemma \ref{} since each edge is found with a somewhat worse but constant probability $p"$.
In addition, a $1/n^{c'}$ probability error is introduced with each verification, but over polynomial updates, for sufficiently large constant $c'$,  this is still less than $1/n^c$.

The time to perform an update is affected only by a constant factor as $O(\log n)$ bits are stored per vertex per level in a cutset data structure as before. The cost of checking the verification requires an additive $O(\log n)$ time for each $search$, which does not affect the asymptotic running time. 
\end{proof}
}

\bigskip

\noindent
{\it Running time for the cutset data structure}

Each hash function requires a constant time to perform the hash. $O(\log n)$ hashes are performed to determine if an edge should be sampled for level $i$ and for computing its tag. 
Each node in a tree in the forest of a cutset data structure has a vector of $O(\log n)$ words, since there are $O(\log n)$ levels for each level there are two $O(\log n)$ bit vectors, $s_i(x) $ and $s'_i(x)$.  Maintaining the sum of vectors of in an ET-tree of $O(\log n)$ vectors of size $O(\log n)$ requires $O(\log^2 n)$ time. Similarly, $\mathsf{Make\_nontree\_edge}$ or $\mathsf{Make\_tree\_edge}$ requires $O(\log^2 n)$ time. 
$Search(T)$ involves finding the maximum level $i$ with non-zero $z_i$ and checking its tag  or determining if $z_i$ is on the current edge list  and determining the name of the trees containing these endpoints all take no more than $O(\log n)$ time each.
Hence we can conclude that the update time for edge insertion or deletion is $O(\log^2 n)$ time while the search time is $O(\log n)$.
 This concludes the proof of Lemma \ref{l:cutset}.

\old{
$\Refresh$ is the bottleneck in terms of time. 
To analyze $\Refresh$, we first note that $T(u) \in F_{\ell}$ is isolated iff the number of nodes in $T(u) \in F_{\ell}$ equals the number of nodes in $T(u) \in F_{\ell+1}$, and this requires finding the root of each and comparing the sizes, which can be done in $O(\log n)$ time.

For the sole purpose of finding the maximum tier edge on a path between two nodes we can use link-cut trees (also known as Sleator-Tarjan trees \cite{ST}) to represent $F_{top}$.  Link-cut trees provide a means of finding the maximum weight edge on a path between two nodes in a  dynamic tree. It uses $O(\log n)$ time per tree edge addition, deletion, edge relabeling, and the find operation. 
Using the link-cut tree representation for $F_{top}$, we can maintain a labelling of each tree edge  with its tier number and a maximum weighted edge on the path between $v$ and $w$ can be returned in $O(\log n)$ time. Maintaining one link-cut tree increases does not affect the asymptotic cost of the algorithm in terms of space or time. 

For each tier: in the worst case, twice $\Refresh$ determines if there is an isolated tree in $O(\log n)$ time, returns a maximum tier edge on a path in $O(\log n)$ time, and in each of the higher tiers, it may insert a tree edges and/or delete a tree edge. The cost of this is dominated by the last action. Since each change to a forest requires $O(\log^2 n)$ time and for each tier there are $O(\log n)$ tiers above it, this is $O(\log^3 n)$. As there are $O(\log n)$ tiers, the total time is $O(\log^4 n)$.  
}
\old{

Each cutset data structure is represented by an array of nodes as described previously, indexed
by tier.  Determining the lowest unmatched ancestor of a vertex can be implemented by checking each tier to determine if the tree containing  $u$ in that tier is unmatched. One way to determine if a tree is unmatched is to maintain the number of vertices in the tree and check if its parent has the same size, which can be done with ET-trees. 

Using ST-trees \cite{ST} to represent the top tier forest $F$ in which each tree edge is labelled with its tier number,  $Reconnect$ can find a maximum weighted edge on the path from $v$ to $w$ in $O(\log n)$ time. 

}

\section{2-edge-connectivity}

In this section, we treat the dynamic 2-edge connectivity algorithm \cite{Holm} and two instances of the space efficient dynamic connectivity data structure presented previously as black boxes, which we denote by 2-EDGE, CONN1, and CONN2 respectively. All are generated using independent randomness. Using these black boxes, we will present an algorithm to solve dynamic 2-edge connectivity in $O(n \log^2 n)$ words with the same query time and an update time which is  within a $O(log^2 n)$ factor of \cite{Holm}, i.e.,
 $O(\log^6 n)$ amortized update time and $O(\log n/\log\log n)$ query time.  The idea is to maintain a certificate for 2-edge connectivity using the dynamic connectivity method shown in the previous sections, and use that certificate as the input into 2-EDGE. In the algorithm below, 
We give some definitions:

\begin{itemize}
\setlength\itemsep{0em}
\item
$G=(V,E)$: the dynamic graph
\item
$F_{1}$: A maximal forest of G maintained by CONN1
\item
$G_{1}=(V, E_{1})$: The graph with the forest edges removed. $E_{1} = E \setminus F_1$
\item
$F_{2}$: A maximal forest of $G_{1}$ maintained by CONN2
\item
$G_c=(V, E_c)$ The graph which is the input to 2-EDGE.  
\item
$F$: The 2-edge connected forest  of $G_c$ maintained by 2-EDGE
\end{itemize}

We now present algorithms for handling updates and queries:

\noindent
{\it Insert(e)}:\\
1. Insert $e$ into $E$ and $E_1$\\
\{Process updates to $F_{1}$\}\\
2. For each edge $e'$ added to $F_1$, 
delete $e'$ from $E_{1}$\\
3. For each edge $e'$ removed from $F_1$
insert $e'$ into $E_{1}$\\
\{Process updates to $F_1 \cup F_{2}$\}\\
4. For each edge $e'$ added to $F_1 \cup F_2$, insert $e'$ into $E_{c}$\\
5. For each edge $e'$ removed from $F_1 \cup F_2$, remove $e'$ from  $E_{c}$\\

\noindent
{\it Delete edge $e=(u,v)$} is the same as $Insert(e)$ except for the first step which is replaced by
``Delete $e$ from $E$ and $E_1$".\\

\noindent
$Query(u,v)$: Query 2EDGE to see if $u$ and $v$ are 2-edge connected.

\begin{theorem}
\label{2edge}
W.h.p., fully dynamic 2-edge connectivity can be maintained in $O(\log^6 n)$ amortized time per update and $O(\log n/\log \log n)$ per query using space $O(n \log^2 n)$ words, over a polynomial length sequence of updates.
\end{theorem}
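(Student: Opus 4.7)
The plan is to establish correctness via a Nagamochi--Ibaraki style sparsification argument, then bound the amortized cost by tracking how updates cascade through the three black-box structures, and finally verify the space bound.

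For correctness, I would show that $G_c = (V, F_1 \cup F_2)$ is a 2-edge connectivity certificate of $G$: two vertices are 2-edge connected in $G_c$ if and only if they are in $G$. The key observation is that if $e \in F_1$ is not a bridge in $G$, then some non-tree edge $(u,v) \in G \setminus F_1$ crosses the cut defined by $F_1 - e$. Since $F_2$ is a maximal forest of $G \setminus F_1$, the endpoints $u$ and $v$ are connected in $F_2$, so the $F_2$-path between $u$ and $v$ must contain an edge crossing this same cut; hence $e$ is not a bridge in $G_c$. The converse is immediate since $G_c \subseteq G$. Bridges on $F_1$-paths determine 2-edge connectivity, so the equivalence follows. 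By Corollary \ref{c:dynST} applied independently to CONN1 and CONN2, with probability $1 - 1/n^{c'}$ over a polynomial-length update sequence, $F_1$ and $F_2$ are spanning forests of $G$ and $G \setminus F_1$ respectively, so querying 2-EDGE on $G_c$ gives the correct answer w.h.p.

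For the amortized time, I would analyze the cascade of induced operations. A single update to $G$ feeds one operation to CONN1; by Lemma \ref{l:analysis} and the structure of $\Refresh$, at most $O(\log n)$ edges are added to or removed from $F_{top}$ per update (at most two new tree edges per tier, across $O(\log n)$ tiers, each propagated up to $F_{top}$). Each such change to $F_1$ becomes one edge insertion or deletion in $E_1$, hence one operation on CONN2, which by the same argument triggers $O(\log n)$ changes to $F_2$. So $G_c$ receives $O(\log^2 n)$ edge updates per $G$-update, each handed to 2-EDGE. Since CONN1 and CONN2 each cost $O(\log^4 n)$ per operation (worst case, Theorem \ref{t:main}) and 2-EDGE costs $O(\log^4 n)$ amortized per operation \cite{Holm}, the total cost per $G$-update is
\[
O(\log^4 n) + O(\log n)\cdot O(\log^4 n) + O(\log^2 n)\cdot O(\log^4 n) = O(\log^6 n),
\]
amortized. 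Amortization composes: over $N$ updates to $G$, 2-EDGE sees $O(N \log^2 n)$ updates at total cost $O(N \log^6 n)$. Queries pass directly to 2-EDGE and take $O(\log n/\log \log n)$ time.

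For space, CONN1 and CONN2 each use $O(n \log^2 n)$ words by Theorem \ref{t:main}. Because $|E_c| = |F_1 \cup F_2| \le 2(n-1)$, 2-EDGE runs on a graph with $O(n)$ edges and thus uses $O(m + n \log n) = O(n \log n)$ words \cite{Holm}. The total is $O(n \log^2 n)$.  The main obstacle is the cascade counting: one must distinguish the $O(\log n)$ changes per update that CONN1 \emph{exposes} at its top tier (the only changes propagated to CONN2) from the $O(\log^2 n)$ internal per-tier forest operations that are already absorbed into the $O(\log^4 n)$ per-update bound of Theorem \ref{t:main}. Respecting this distinction is what yields $O(\log^6 n)$ rather than a spurious $O(\log^8 n)$.
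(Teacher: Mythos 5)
Your proposal is correct and follows essentially the same route as the paper: the certificate $F_1\cup F_2$, the cascade count of $O(\log n)$ changes to $F_{top}$ of CONN1 inducing $O(\log^2 n)$ updates to $E_c$, and the space bound via $|E_c|\le 2(n-1)$ (you in fact supply the certificate argument that the paper only asserts). The one point the paper makes that you elide is why Corollary~\ref{c:dynST} still applies to CONN2 even though its input stream is determined by CONN1's coin flips --- namely, that stream is independent of CONN2's own randomness, which is what the oblivious-update assumption requires.
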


\noindent
A proof is given in the Appendix.

\old{
 \section{Discussion}
The methods here are surprisingly simple yet quite different from other known techniques for dynamic graph algorithms. As noted in the related work section, this work can be seen as a technique for adaptive graph sketching with a concern for fast update and query time. 
What update and query times can be achieved for other graph problems in the graph sketching model or when space is an issue? It is also interesting to ask about these questions in a parallel streaming model such as MapReduce.

Classic dynamic graph problems which remain open and which still seem hard are dynamic minimum spanning tree in 
$o(\sqrt{n})$ worst case time and Las Vegas or deterministic graph connectivity with $o(\sqrt{n})$ worst case cost. 

 It is not hard to see that the technique described here can be made deterministic with an additional $\tilde{O}(k)$ factor in the update time  if we know the cuts are of size no greater than $k$, through the use of combinatorial designs. Hence any lower bounds for deterministic algorithms should make use of large cuts.\\}

 \noindent
 

\newpage

 \bibliographystyle{plain}


\section*{Appendix}
\section*{Additional proofs}
\newtheorem*{invariants}{Lemma 3.5}
\begin{invariants} 
Invariants (1), (2) and (3) are maintained by $\mathsf{Initialize}$, $\mathsf{Insert}$, and $\mathsf{Delete}$.
\end{invariants}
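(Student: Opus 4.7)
The plan is induction on the update count. The base case is $\mathsf{Initialize}$, where every $F_\ell$ is the singleton forest on $V$; Invariant (1) is immediate and (2), (3) hold vacuously. For the inductive step, I assume all three invariants hold before an $\mathsf{Insert}$ or $\mathsf{Delete}$ and check each one after the call. Both update types reduce to a sequence of $\Insertedge$ / $\Deleteedge$ calls on every tier followed by $\Refresh(\{x,y\})$, so it suffices to trace the effects of $\Refresh$. Invariant (1) is easy: the only line that ever creates a tree edge is line \ref{lineTree}, which only promotes edges to tiers $\ell' > \ell \geq 0$, so $F_0$ is never modified.

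For Invariant (2), I would split into ``each $F_\ell$ is a forest'' and ``$F_\ell \subseteq F_{\ell+1}$.'' The forest property follows because $\Refresh$ always cuts a cycle edge before adding $e_u$ when a cycle would be created. Inclusion is trivially preserved by deletions (applied on every tier) and by the promotion of $e_u$ on all tiers $\ell' > \ell$. The subtle point is the cycle-breaking step: I must argue that the chosen $e'$ lies in $F_j, F_{j+1},\ldots, F_{top}$ and in no lower tier, so that removing it from exactly those forests does not destroy inclusion. This follows from the choice of $j$ as the lowest tier containing a path between the endpoints of $e_u$: at least one edge on that tier-$j$ path must have tier exactly $j$ (else a tier-$(j-1)$ path would exist), and hence the maximum-tier edge $e'$ has tier exactly $j$.

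The real work is Invariant (3). Following the outline already present in the excerpt, I would prove a structural claim first: \emph{the only fragments that can ever become ``bad'' (isolated at some tier $\ell$ with a successful $\Search$) during the processing of update $\{x,y\}$ are of the form $T(x)$ or $T(y)$.} For this claim, (a) immediately before $\Refresh$, only the cuts incident to $x$ or $y$ can have changed --- for $\mathsf{Insert}$ this is by inspection; for $\mathsf{Delete}$ of a tree edge, the two freshly separated fragments are exactly $T(x)$ and $T(y)$ --- and (b) during $\Refresh$, the only mechanism that creates new fragments is the promotion of $e_u$ on line \ref{lineTree}, which by Invariant (2) produces a fragment still containing $x$ or $y$; the swap of $e'$ for $e_u$ in a cycle leaves every cut unchanged so creates no new bad fragment.

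Given the claim, Invariant (3) follows by induction on the loop index $\ell$ in $\Refresh$: before iteration $\ell$, Invariant (3) holds on all tiers $<\ell$ (or vacuously when $\ell=0$), and that iteration only edits tiers $>\ell$; inside the iteration, if $T(x)$ (resp.\ $T(y)$) is isolated at tier $\ell$ and $\Search$ is successful, $e_u$ is inserted into $F_{\ell+1}$, so $v(T(x)) \subsetneq v(T'(x))$ at tier $\ell+1$, discharging the obligation at tier $\ell$. The hard part of the whole argument is really the structural claim --- in particular the second half, which requires noticing that cycle-replacement is cut-preserving and that the promotion of $e_u$ can only produce fragments that inherit membership of $x$ or $y$ via Invariant (2). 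Once that is in hand, the bottom-up induction closes cleanly.
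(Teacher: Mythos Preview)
Your proposal is correct and follows essentially the same approach as the paper: the same split into the three invariants, the same structural claim for Invariant~(3) (any ``bad'' fragment must be $T(x)$ or $T(y)$, both before and during $\Refresh$), and the same bottom-up induction on the loop index $\ell$. One small point to tighten in your Invariant~(2) argument: $j$ is defined as the lowest tier \emph{strictly above $\ell$} with an $a$--$b$ path, so your step ``else a tier-$(j-1)$ path would exist, contradicting minimality'' needs $j-1>\ell$; the paper secures this by observing that $T(u)$ being isolated forces $j>\ell+1$ (there is no $a$--$b$ path at tier $\ell$ since $e_u$ crosses the cut of $T(u)$, nor at tier $\ell+1$ since $T(u)$ is the same tree there).
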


\begin{proof}
It is clear that all invariants hold after a call to $\mathsf{Initialize}$. We now consider update operations.

Maintenance of Invariant (1) holds trivially because $\Refresh$ never makes an edge
into a tree edge on tier 0.

For Invariant (2), first note that $\Refresh$ checks for a cycle and removes an edge from a potential cycle before inserting an edge which causes a cycle in a tree, and so each $F_{\ell}$ is a forest. For the inclusion
property $F_{\ell} \subseteq F_{\ell+1}$,  first note that when handling deletions, the same edge is
deleted from every tier, so this does not violate the invariant. For both
insertions and deletions,
 line \ref{lineTree} of $\Refresh$ makes $e_u$ a tree edge on all tiers
$> \ell$, so inclusion is immediately maintained when no cycles are created. 
Now suppose edge $e_u$ would form a cycle if inserted.
 Let $e'$ be an edge of maximal tier $j$ in this cycle. 
We first note that $j > \ell +1$, due to the fact that $T(u)$ is isolated. So by
the minimality of $j$ and the maximality of the tier of $e'$, 
it must be the case that $e'$ is not an edge in the forest on any tier $< j$. Therefore, $e'$ is removed from all forests containing it, and the
 inclusion property of Invariant 2 is maintained.

Invariant (3) fails only if there is an isolated fragment $T$ on some tier $\ell$ for which $\Search(T,\ell)$ is successful. We call this a {\em bad} $T$.  Initially, if we start with an empty graph, there is no bad $T$.  We assume there is a first update during which a bad $T$ is created and show a contradiction. A bad $T$ is created either if (1) $T$ is newly formed by the update or (2) $T$ pre-exists the update and $\Search(T,\ell)$ is successful in the current graph $G$. In (2) there are two subcases: $T$ wasn't isolated before the update or $T$ was isolated before the  update but $\Search(T,\ell)$ became successful after the update. 
Let $\{x,y\}$ be the updated edge. 
We first show: \\

\noindent
{\it  Claim:} Before the call to $\Refresh$, the only bad fragment introduced on
any tier has the form $T(x)$ or $T(y)$,  and this remains the case during the execution of $\Refresh$.\\
\noindent
{\it Proof of Claim:}
Before the call to $\Refresh$, there are two
ways in which this update could create a bad fragment. The first, which may
occur for both insertions and
deletions, happens when there is a fragment $T$ on a tier $\ell$ for which
$\Search(T,\ell)$ did not return an edge before the update, but now does.
This can only be the case if the cut induced by $T$ has changed, and this is only when exactly one of $x$ or $y$ is a node in $T$.
The second can happen before a $\Refresh$ and only for deletions, in particular when $\{x,y\}$ is removed. In this case, two new fragments  $T(x)$ and
$T(y)$ are created and no other cuts are changed.  Hence the claim is proved for the bad fragments created before the call to $\Refresh$. 

Now, we consider what happens during $\Refresh$. 
During the execution of $\Refresh$, a bad fragment may result when $e_u$ is made into a tree edge on some tier $j > \ell$ on Line \ref{lineTree}, causing
two fragments $T$ and $T'$ to be joined into one new fragment which may be bad. Since
$e_u$ is the result of $\Search(T(x),\ell)$ or $\Search(T(y), \ell)$ for some $\ell$, the new fragment contains $x$ or $y$, and by Invariant 2, so does every tree on higher tiers created by inserting $e_u$.   Note that replacing one edge by another edge in a cycle does not affect any cut, so the replacement of $e'$ by $e_u$ does not cause the creation of a new bad fragment. This concludes the proof of the claim.
\old{ 
it will create a fragment with the same set of nodes as a fragment that existed
before the update. So the cut induced by these fragments will be the same,
and the fragment resulting after deleting the cylce edge will be bad only if
the fragment was already bad, in which case it must have the form $T(x)$ or $T(y)$.} \\

Given the claim, we now show that after calling $\Refresh$, Invariant (3)
holds. The proof is by induction on the value of $\ell$ in the for-loop of $\Refresh$. Assume $\ell <top$.  Suppose that before iteration $\ell$, Invariant 3 holds on all tiers $< \ell$ or $\ell=0$. 
This will still be the case after iteration $\ell$, since this iteration only makes changes
on tiers $> \ell$. Suppose that $T(x) \in F_{\ell} $ is bad. This means that
$T(x)$ is isolated, and $\Search(T(x),\ell)$ returns an edge $e_u$.
Since $e_u$ is inserted into $F_{\ell+1}$,  $T(x)$ in tier $\ell$ is no longer bad,
and Invariant (3) holds on all tiers $\le \ell$.
The same argument applies to $T(y)$. Maintenance of Invariant (3) now holds by induction for all $\ell <top$.
\end{proof}
Lemma~\ref{l:findany} follows immediately from the following.
\begin{lemma}  With probability $1/8$,
there is an integer $j$
such that exactly one $w\in W$ hashes to a value in $[2^j]$.
\end{lemma}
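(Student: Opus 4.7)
The plan is to pick $j$ so that the hit probability $p := 2^j / 2^{levelNum}$ of a single element sits in the sweet spot $1/(4|W|) < p \le 1/(2|W|)$. Such a $j$ exists because $|W| \ge 1$ makes the interval of length roughly a factor of two in $p$, and the condition $|W| < 2^{levelNum}$ (which we get from $h$ mapping into a range of size at least that of the universe of edges) places $j$ in the allowed range $0,\dots,levelNum$. Intuitively, this choice makes the expected number of elements of $W$ hitting $[2^j]$ a constant strictly less than $1$, so with constant probability exactly one element hits.

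Next, for each fixed $w \in W$, I would lower bound the probability that $w$ is the \emph{unique} element hitting $[2^j]$. Writing this as
\[
\Pr[h(w)\in[2^j]] \cdot \Pr\bigl[\forall w'\neq w:\ h(w')\notin[2^j] \,\bigm|\, h(w)\in[2^j]\bigr],
\]
the first factor equals $p$. For the second factor, I would apply a union bound over the $|W|-1$ other elements and use $2$-wise independence of $h$ to drop the conditioning, yielding a bound of at least $1 - (|W|-1)p$. Combined with $p \le 1/(2|W|)$, this gives a per-$w$ lower bound of $p(1 - (|W|-1)p) \ge p/2 > 1/(8|W|)$.

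Finally, I would sum over $w \in W$. The events ``$w$ is the unique hit of $[2^j]$'' for distinct $w$ are pairwise disjoint, so the probability that \emph{some} $w \in W$ is the unique hit of $[2^j]$ is at least $|W| \cdot 1/(8|W|) = 1/8$. Since this event is witnessed by a specific choice of $j$, the existence claim in the lemma follows.

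The only mildly delicate step is the conditional probability manipulation: we need $2$-wise independence precisely to replace $\Pr[h(w')\in[2^j] \mid h(w)\in[2^j]]$ by $\Pr[h(w')\in[2^j]] = p$ in the union bound; without it the argument would collapse. Everything else is routine arithmetic on the chosen scale of $j$.
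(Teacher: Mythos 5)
Your proof is correct and follows essentially the same route as the paper's: choose $j$ so that the single-element hit probability $p=2^j/2^{levelNum}$ lies in $(1/(4|W|),\,1/(2|W|)]$, lower-bound the probability that a fixed $w$ is the unique hit by $p(1-(|W|-1)p) > 1/(8|W|)$ via a union bound made valid by 2-wise independence, and sum over the disjoint events to get $1/8$. No gaps.
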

\begin{proof}
Let $\ell=\binom{n}{2}$.
We prove the statement of the lemma for $j=\ell-\lceil{\lg |W|}-2 \rceil$.
Then $1/(4|W|)<2^j/2^\ell < 1/(2|W|)$. 
\begin{align*}
\Pr_{h}&[\exists !\, w\in W: h(w)\in [2^j]]\\[-1mm]
&= \sum_{w\in W} \Pr_{h}[h(w)\in [2^j]
\wedge \forall w'\in W\setminus\{w\}: h(w')\not\in [2^j]]\\[-3mm] \\
& \hbox{We evaluate the inside of the summation: }\\
\old{
&A(w) \hbox{ denote } \Pr_{h} [h(w)\in [2^j]]:\\
&=  \sum_{w\in W} A(w) \Pr_{h}[\forall w'\in W\setminus\{w\}:h(w')\not\in [2^j] \mid h(w)\in [2^j]]\\[1mm]
&\geq \sum_{w\in W} A(w)
(1-
\sum_{w'\in W\setminus\{w\}}\Pr_{h}[h(w')\in [2^j]\mid h(w)\in [2^j]])\\
 &\hbox{By 2-wise independence:} \\
&=\sum_{w\in W} A(w)
(1-
\sum_{w'\in W\setminus\{x\}}\Pr_{h}[h(w')\in [2^j]])\\ 
&= |W|(2^j/2^\ell\cdot (1-(|W|-1)2^j/2^\ell)) \\
&>|W|/(4|W|)(1-|W|/(2|W|)=1/8.}
&=  \Pr_{h} [h(w)\in [2^j]]\Pr_{h}[\forall w'\in W\setminus\{w\}:h(w')\not\in [2^j] \mid h(w)\in [2^j]]\\[1mm]
&\geq  \Pr_{h} [h(w)\in [2^j]]
(1-
\sum_{w'\in W\setminus\{w\}}\Pr_{h}[h(w')\in [2^j]\mid h(w)\in [2^j]])\\
 &\hbox{By 2-wise independence:} \\
&=  \Pr_{h} [h(w)\in [2^j]]
(1-
\sum_{w'\in W\setminus\{w\}}\Pr_{h}[h(w')\in [2^j]])\\ 
&= (2^j/2^\ell\cdot (1-[(|W|-1)2^j/2^\ell]) \\
&>(1/(4|W|)(1-[(|W|-1)/(2|W|)] >  1/(4|W|)(1/2)]=1/(8|W|)\\
\hbox{Then} \Pr_{h}&[\exists !\, w\in W: h(w)\in [2^j]]> 1/8.
\end{align*}
\end{proof}
\newtheorem*{twoedge}{Theorem 5.1}
\begin{twoedge}
W.h.p., fully dynamic 2-edge connectivity can be maintained in $O(\log^6 n)$ amortized time per update and $O(\log n/\log \log n)$ per query using space $O(n \log^2 n)$ words, over a polynomial length sequence of updates.
\end{twoedge}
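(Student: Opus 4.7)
The plan is to establish three things: (i) correctness by a sparse-certificate argument, (ii) a bounded cascade of tree-edge changes per input update, and (iii) composition of the black-box time and space bounds.

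For correctness I would invoke the classical Nagamochi--Ibaraki sparse 2-edge-connectivity certificate: if $F_1$ is a spanning forest of $G$ and $F_2$ is a spanning forest of $G \setminus F_1$, then two vertices $u,v$ are 2-edge-connected in $G$ iff they are 2-edge-connected in $F_1 \cup F_2$. By Corollary \ref{c:dynST} applied independently to CONN1 and CONN2 (whose randomness is independent), with probability $1 - 1/n^{c'}$ over a polynomial-length update sequence every snapshot of $F_1$ is a spanning forest of the current $G$ and every snapshot of $F_2$ is a spanning forest of the current $G_1 = G \setminus F_1$. Consequently $G_c = (V, F_1 \cup F_2)$ is a correct 2-edge certificate at every step, and the deterministic Holm--de Lichtenberg--Thorup 2-EDGE black box run on $G_c$ answers queries correctly with $O(\log n / \log \log n)$ query time.

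For the update-time analysis the key observation is that a single input update to $G$ produces only $O(\log n)$ tree-edge changes to $F_1$, not $O(\log^2 n)$. Inspecting Refresh: at each of the $top = O(\log n)$ tiers at most two successful searches occur, each producing one new edge that is installed as a tree edge on all higher tiers and at most one cycle-breaking edge that is removed from all tiers on which it was a tree edge; each such per-tier event contributes a constant number of insertions/deletions at the top tier, so $F_1 = F_{top}$ of CONN1 changes in at most $O(\log n)$ places per input update. Each such change is fed to CONN2 as a single edge update on $G_1$; by the same argument CONN2 produces at most $O(\log n)$ tree-edge changes in $F_2$ per CONN2 input, for a total of $O(\log^2 n)$ changes to $E_c$ per input update. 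Multiplying by the $O(\log^4 n)$ worst-case cost per operation of CONN1 and CONN2 (Theorem \ref{t:main}) and by the $O(\log^4 n)$ amortized cost per operation of 2-EDGE gives
\[
O(\log^4 n) \;+\; O(\log n)\cdot O(\log^4 n) \;+\; O(\log^2 n)\cdot O(\log^4 n) \;=\; O(\log^6 n)
\]
amortized time per input update. For space, CONN1 and CONN2 each use $O(n \log^2 n)$ words by Theorem \ref{t:main}, while 2-EDGE is instantiated on a graph with at most $2(n-1)$ edges and therefore occupies only $O(n \log n)$ words, for a total of $O(n \log^2 n)$ words.

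The main obstacle I expect is justifying the cascade bound: one must distinguish the total number of tree-edge changes summed over all tiers of a single CONN instance (which is $\Theta(\log^2 n)$ by Lemma \ref{l:analysis}) from the number of changes visible at the topmost tier (which is $O(\log n)$, since each tier's new tree edge arrives at $F_{top}$ as one insertion and at most one companion deletion), because only the latter drives operations on CONN2 and 2-EDGE. A secondary concern is the validity of invoking the Holm et al. amortized bound on the derived operation sequence of length $O(\log^2 n)$ times the input length; since the adversary on $G$ is by assumption oblivious to the random bits of CONN1 and CONN2, the derived update sequence to 2-EDGE is itself oblivious, and the standard amortized analysis of Holm--de Lichtenberg--Thorup applies to the polynomially-long derived sequence.
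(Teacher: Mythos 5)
Your proposal is correct and follows essentially the same route as the paper's proof: a sparse two-forest certificate for correctness (with independence of CONN1's and CONN2's randomness plus a union bound), the cascade count of $O(\log n)$ changes to $F_{top}$ of CONN1 and hence $O(\log^2 n)$ updates to $E_c$, and the same composition of time and space bounds. Your explicit separation of the $O(\log n)$ changes visible at $F_{top}$ from the $O(\log^2 n)$ total per-tier changes, and your remark on the obliviousness of the derived update sequence, are points the paper's proof uses implicitly.
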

\begin{proof}
First we prove correctness:
Let $F$ be a spanning forest of $G$ and $F'$ be a spanning forest of $G\setminus F$. Then $F\cup F'$ is a certificate of
2-edge connectivity, i.e., if $G$ is 2-edge connected then so is  $F\cup F'$. The correctness then depends on
whether $F_1$ and $ F_2$ are correctly maintained as the spanning forests of $G$ and $G \setminus F_1$ respectively.
By the proof of Theorem \ref{t:main}, $F_1=F_{top}$ maintained by CONN1 is the spanning forest of $G$ w.h.p. Since CONN2 uses independent randomness from CONN1,
the updates to $F_1$ and therefore the updates of the input graph $G\setminus F_1$ to CONN2 are independent of the randomness in CONN2, similarly
$F_2=F_{top}$ maintained by CONN2 is a spanning forest of $G \setminus F_1$ w.h.p. By a union bound, both forests are maintained correctly w.h.p. We note that 2EDGE is always correct.

We observe from the proof of Theorem \ref{t:main} that a single update to $E$ results in $O(\log n)$ updates to $F_{top}$ in CONN1, which in turn result in $O(\log^2 n)$ updates to $E_C$. Each update to $E_C$ has amortized time
$O(\log^4 n)$ in 2EDGE. Hence the total time per update is $O(\log^4 n)$ worst case time in CONN1 plus $O(\log^5)$ worst case time in CONN2 plus $O(log^6)$ amortized time in 2EDGE, for a total amortized update time of $O(\log^6 n)$.
As the space used by 2EDGE is $O(m + n\log n)$ words and $|E_c|< 2n$, and the space used by CONN1 and CONN2 is
$O(n \log^2 n)$ words,  the total space used is $O(n \log^2 n)$ words.

\end{proof}

\end{document}